\newtheorem{theorem}{Theorem}[section]
\newtheorem{proposition}[theorem]{Proposition}
\newtheorem{corollary}[theorem]{Corollary}
\newtheorem{lemma}[theorem]{Lemma}
\theoremstyle{definition}
\newtheorem{definition}[theorem]{Definition}
\newtheorem{example}[theorem]{Example}
\newtheorem{remark}[theorem]{Remark}
\newtheorem{construction}{Construction}
\newtheorem*{conjecture}{Conjecture}
\newcommand{\C}{\mathcal{C}}
\newcommand{\F}{\mathbb{F}}
\newcommand{\wt}{\mathrm{wt}}
\newcommand{\M}{\mathcal M}
\newcommand{\Fq}{\mathbb{F}_q}
\newcommand{\ch}{\mathrm{Char}}
\newcommand{\supp}{\mathrm{supp}}
\title{A geometric characterization of minimal codes and their asymptotic performance}
\author[1]{Gianira N. Alfarano \thanks {G. N. Alfarano acknowledges the support of  Swiss National Science Foundation grant n. 188430.}}
\affil[1]{University of Z{u}rich, Switzerland}
\author[2]{Martino Borello}
\affil[2]{LAGA,  UMR 7539, CNRS, Universit\'e Paris 13 - Sorbonne Paris Cit\'e, Universit\'e Paris 8, F-93526, Saint-Denis, France}
\author[3]{Alessandro Neri \thanks {A. Neri acknowledges the support of  Swiss National Science Foundation grant n. 187711.}}
\affil[3]{Inria Saclay \^Ile-de-France, 91120 Palaiseau, France}
\begin{document}
\maketitle

\begin{abstract}
In this paper, we give a geometric characterization of minimal linear codes. In particular, we relate minimal linear codes to cutting blocking sets, introduced in a recent paper by Bonini and Borello.
Using this characterization, we derive some bounds on the length and the distance of minimal codes, according to their dimension and the underlying field size. Furthermore, we show that the family of minimal codes is asymptotically good. Finally, we provide some geometrical constructions of minimal codes as cutting blocking sets.

\end{abstract}

\section{Introduction}


Let $\Fq$ be a finite field and $\C\subseteq \Fq^n$ be a linear code. A codeword $c\in\C$ is called \emph{minimal} if its support $\{i \mid c_i\neq 0\}$
does not contain the support of another independent codeword. The study of the minimal codewords of a linear code finds application in combinatorics, in the analysis of the Voronoi region for decoding purposes \cite{ashikhmin1998, agrell1998voronoi} and in secret sharing schemes \cite{massey1993minimal, massey1995some, ashikhmin1998}. 

Secret sharing schemes were introduced independently by Shamir and Blakley in 1979 \cite{shamir1979share, blakley1979safeguarding}. They are protocols used for distributing a secret among a certain number of participants. In particular, in its original 
framework, a secret sharing scheme works as follows: a dealer gives a share of a secret to $n$ players in such a way that any subset of at least  $t$ players can reconstruct the secret, but no subset of less than $t$ players can. This is also called $(n,t)$-threshold scheme protocol. A more general construction, based on linear codes, was first investigated by McEliece and Sarwate in 1981 \cite{mceliece1981sharing}, where Reed-Solomon codes were used. Later, several authors used other linear error-correcting codes to construct the same protocol \cite{karnin1983secret, massey1993minimal, massey1995some, ding2000secret}. 

The set of subsets of participants which are able to recover the secret is called \emph{access structure}. It is common to consider only subsets which do not admit proper subsets of participants able to recover the secret: we may refer to their collection as \emph{minimal access structure}. For example, in an $(n,t)$-threshold scheme protocol, the access structure is given by all subsets of at least $t$ participants, whereas the minimal access structure is given by all subsets of exactly $t$ participants.
 
In \cite{massey1993minimal}, Massey relates the secret sharing protocol to minimal codewords: in particular, the minimal access structure in his secret sharing protocol is given by the support of the minimal codewords of a linear code $\C$, having first coordinate equal to $1$. However, finding the minimal codewords of a general linear code is a difficult task. For this reason, a special class of codes has been introduced: a linear code is said to be \emph{minimal} if all its nonzero codewords are minimal.


In \cite{ashikhmin1998}, Ashikhmin and Barg gave a sufficient condition for a linear code to be minimal.

\begin{lemma}
Let $\C$ be an $[n,k]_q$ code, $w_{min}, w_{max}$ be the minimum and the maximum Hamming weights in $\C$, respectively. Then $\C$ is minimal if 
\begin{equation}\tag{AB}\label{AB}
\frac{w_{min}}{w_{max}}> \frac{q-1}{q}. 
\end{equation}
\end{lemma}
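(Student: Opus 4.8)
The plan is to prove the contrapositive-flavored statement directly: assume $c, c' \in \C$ are nonzero codewords with $\supp(c') \subseteq \supp(c)$, and show that under the hypothesis \eqref{AB} the codewords $c$ and $c'$ must be linearly dependent. First I would reduce to a cleaner situation: since we only care about supports, for each scalar $\lambda \in \Fq^*$ the codeword $c - \lambda c'$ is supported inside $\supp(c)$, and on the positions where $c_i \neq 0$ but $c'_i = 0$ it agrees with $c$. The key idea is an averaging (or pigeonhole) argument over the $q-1$ nonzero scalars $\lambda$: on each coordinate $i \in \supp(c')$, exactly one choice of $\lambda$ makes the $i$-th coordinate of $c - \lambda c'$ vanish, namely $\lambda = c_i / c'_i$; for the other $q-2$ choices it does not vanish.

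The main computation I would carry out: let $s = |\supp(c)|$ and $t = |\supp(c')|$, so $w_{min} \le t \le s \le w_{max}$, and note $\supp(c - \lambda c') \subseteq \supp(c)$ for all $\lambda \in \Fq^*$. Summing the weights, $\sum_{\lambda \in \Fq^*} \wt(c - \lambda c')$ counts, for each coordinate $i \in \supp(c)$, the number of $\lambda \in \Fq^*$ with $c_i - \lambda c'_i \neq 0$. If $i \in \supp(c) \setminus \supp(c')$ this is all $q-1$ values; if $i \in \supp(c')$ this is $q-2$ values (all but $\lambda = c_i/c'_i$). Hence
\[
\sum_{\lambda \in \Fq^*} \wt(c - \lambda c') = (q-1)(s - t) + (q-2)t = (q-1)s - t.
\]
Now suppose for contradiction that $c$ and $c'$ are linearly independent, so $c - \lambda c' \neq 0$ for every $\lambda \in \Fq^*$, whence each term on the left is at least $w_{min}$, giving $(q-1)w_{min} \le (q-1)s - t$. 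Combining with $s \le w_{max}$ and $t \ge w_{min}$ yields $(q-1)w_{min} \le (q-1)w_{max} - w_{min}$, i.e. $q\, w_{min} \le (q-1) w_{max}$, i.e. $w_{min}/w_{max} \le (q-1)/q$, contradicting \eqref{AB}. Therefore $c$ and $c'$ are dependent, so every nonzero codeword of $\C$ is minimal.

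I expect the only subtle point to be the bookkeeping in the coordinate-wise count — making sure one correctly handles the coordinates outside $\supp(c')$ versus inside it, and that the single "bad" scalar $c_i/c'_i$ is indeed nonzero (which it is, since $i \in \supp(c)$ forces $c_i \neq 0$). Everything else is routine: the weight sum identity, the use of $c \neq \lambda c'$ to lower-bound each summand by $w_{min}$, and the final rearrangement into the stated inequality. No further results from the excerpt are needed; the argument is self-contained.
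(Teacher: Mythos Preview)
Your argument is correct and is exactly the classical averaging proof of the Ashikhmin--Barg lemma: the coordinate-wise count giving $\sum_{\lambda\in\Fq^*}\wt(c-\lambda c')=(q-1)s-t$, the lower bound $(q-1)w_{min}$ coming from linear independence, and the rearrangement to $w_{min}/w_{max}\le (q-1)/q$ are all carried out cleanly, including the check that the ``bad'' scalar $c_i/c'_i$ is indeed nonzero.

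There is nothing to compare against here, however: the paper does not supply its own proof of this lemma. It is quoted in the introduction as a known result of Ashikhmin and Barg \cite{ashikhmin1998} and used only as background. So your self-contained proof simply fills in what the paper leaves to the reference, and it matches the standard one found there.
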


The Ashikhmin-Barg Lemma gave rise to several works with the aim of constructing minimal codes, see for example \cite{carlet2005linear, yuan2005secret, ding2015linear, ding2016three}. However, condition \eqref{AB} is only sufficient. Some constructions of families of minimal codes not satisfying the condition \eqref{AB} were first presented in \cite{cohen2013minimal,chang2018linear}. In \cite{heng2018minimal}, a necessary and sufficient condition for an $\Fq$-linear code to be minimal was given:
an $[n,k]_q$ code $\C$ is minimal if and only if, for every pair of linearly independent codewords $a,b\in \C$, we have
$$\sum_{\lambda \in \Fq^*} \wt(a+\lambda b) \ne (q-1)\wt(a)-\wt(b).$$

In the same paper, the authors constructed an infinite family of minimal linear codes not satisfying the condition \eqref{AB}. This construction was generalized to finite fields with odd characteristic by Bartoli and Bonini, in \cite{bartoli2019minimalLin}. In \cite{bonini2019minimal}, Bonini and Borello investigated the geometric generalization of the construction in \cite{bartoli2019minimalLin}, highlighting a first link between minimal codes and cutting blocking sets. Moreover, different types of recent constructions of minimal codes based on weakly regular bent plateaued functions have been also presented in \cite{sihem1,sihem2,sihem3}.

In this paper, we give a characterization of minimal linear codes in terms of cutting blocking sets.  We derive some bounds on the length and the distance of minimal codes, according to their dimension and the underlying field size. We then show that the family of minimal codes is asymptotically good and we provide some geometrical constructions and examples.
The paper is organized as follows. In Section \ref{sec:Preliminaries}, we introduce some basics about linear codes over finite fields. In particular, we focus on minimal codes and we introduce the notion of reduced minimal codes. After giving some background on projective systems, we explain how they are in one-to-one correspondence with linear codes. In Section \ref{sec:cuttingsets}, we relate minimal codes (reduced minimal code resp.) to cutting blocking sets (minimal cutting blocking sets resp.), by analyzing the correspondence given in Section \ref{sec:Preliminaries}. In Section \ref{sec:bounds}, we derive bounds on the distance and on the length of minimal codes. One of the main results of this section is Theorem \ref{thm:asymgood}, in which we show that minimal codes are asymptotically good. Moreover, we find a correspondence between cutting blocking sets in $PG(2,q)$ and 2-fold blocking sets and we use this correspondence to derive upper and lower bounds on the length of reduced minimal codes of dimension $3$. In Section \ref{sec:constructions}, we provide a geometrical general construction of reduced minimal codes and we compute the weight distribution of these codes. For reduced minimal codes of dimension $4$ and minimal codes of dimension $5$, we exhibit a construction exploiting cutting blocking sets in $PG(3,q)$ and in $PG(4,q)$ (with smaller length than the ones derived with the general construction). We conclude with further remarks and open questions in Section \ref{sec:conclusion}, where we also propose a challenging conjecture on the minimum distance of minimal codes.

\section{Preliminaries}\label{sec:Preliminaries}

\subsection{Linear codes}

We recall here some basic notions in coding theory which will be useful in the following. 

Let $q$ be a prime power, $n$ be a positive integer and $\Fq$ be the finite field with $q$ elements. In the vector space $\Fq^n$, the \emph{support} of a vector $u=(u_1,\ldots, u_n) \in \Fq^n$ is the set $\supp(u):=\{ i \mid u_i \neq 0\}$. The \emph{Hamming distance} on $\Fq^n$ is defined as $d_H(u,v)=|\supp(u-v)|$, for every pair of vectors $u,v \in \Fq^n$. The \emph{Hamming weight}, $w(u)$ of a vector $u \in \Fq^n$ is its distance from the all zero vector.

 An $[n,k]_q$ \emph{(linear) code} $\C$  is a $k$-dimensional subspace of $\Fq^n$ endowed with the Hamming distance and the elements of $\C$ are  called \emph{codewords}. Its \emph{rate} is the number $R=k/n$. The \emph{minimum distance} $d$ of $\C$ is the quantity $d=\min\{d_H(u,v)\mid u,v \in \C, u\neq v\}$. If the minimum distance $d$ of an $[n,k]_q$ code $\C$  is known, then $\C$ is denoted as $[n,k,d]_q$ code. The weight distribution of $\C$ is the sequence $A_0(\C),\ldots,A_n(\C)$, where $A_i(\C)=|\{c\in C\mid w(c)=i\}|$.

An $[n,k]_q$ code  $ \C $ of dimension $ k\ge 2 $ is said to be   \emph{non-degenerate} if no coordinate position is identically zero. Unless specified otherwise, all codes discussed here are assumed to be non-degenerate.

An important notion for linear codes concerns the equivalence.
%
 Let $\mathcal G$ be the subgroup of the group of linear automorphisms of $\F_q^n$ generated by the permutations of coordinates and by the multiplication of the $i$-th coordinate by an element in $\F_q^*$. Two codes $\C$ and $\C'$ are  \emph{(monomially) equivalent} if there exists $\sigma \in \mathcal G$ such that $\C'=\sigma(\C)$.


The central objects of this paper are minimal codes, which are defined as follows.

\begin{definition}
 A linear code $\C$ is said to be \emph{minimal} if, for every $c, c' \in \C$,
$$\supp(c) \subseteq \supp(c') \, \, \Longleftrightarrow \,\,c=\lambda c' \mbox{ for some } \lambda \in \Fq.$$
\end{definition}


Moreover, we introduce the notion of reduced minimal code, which will allow us to study the maximal rates of minimal codes.

\begin{definition}
 An $[n,k]_q$ minimal code $\C$  is called \emph{reduced} if for every  $i \in \{1,\ldots, n\}$, the code $\C_i$ obtained by puncturing $\C$ on the coordinate $i$ (i.e. deleting the same coordinate $i$ in each codeword) is not minimal.
\end{definition}


\subsection{Projective systems}

In this section we consider linear codes from a geometrical view, as detailed in \cite{MR1186841}.  We first give some background of fundamentals of finite projective geometry. For a detailed introduction we refer to the recent book by Ball \cite{Ball2015}. Let $PG(k,q)$ be the finite
projective geometry of dimension $k$ and order $q$.  Due to a result of Veblen and Young
\cite{MR0179666}, all finite projective spaces of dimension greater than two are isomorphic, and they correspond to Galois geometries. The space $PG(k,q)$ can be easily seen as the vector space of dimension
$k+1$ over the finite field $\F_q$.  In this representation, the one-dimensional subspaces correspond to the
points, the two-dimensional subspaces correspond to the lines, etc. Formally, we have

$$ PG(k,q):= \left(\F_q^{k+1}\setminus \{0\}\right)/_\sim, $$
where
$$u\sim v \mbox{ if and only if } u=\lambda v \mbox{ for some } \lambda \in \F_q.$$

It is not hard to show
by elementary counting that the number of points of $PG(k,q)$ is given by \[\theta_q(k):=\frac{q^{k+1}-1}{q-1}.\]

A \textit{$d$-flat} $\Pi$ in $PG(k,q)$ is a subspace isomorphic to $PG(d,q)$; if $d=k-1$, the subspace $\Pi$ is called a \textit{hyperplane}. It is clear that $\theta_q(k)$ is also the number of hyperplanes in $PG(k,q)$.

Recall that a \emph{multiset $(\M, m)$ in $PG(k-1,q)$} is a set of points $\M \subseteq PG(k-1, q)$ together with a weight function $m$, which associates a positive integer $m(P)$ to all the points $P \in \M$. A multiset $(\M, m)$ is said to be \emph{finite} if $\sum_{P\in \M} m(P) < + \infty$.

\begin{definition}
Let $(\M,m)$ be a finite multiset in $ \Pi=PG(k-1,q) $. We define the \emph{character} function of $ \M $, denoted $ \ch_{\M} $, mapping the power set of $ \Pi $ to the non-negative integers:

 $$\ch_\M(A)=\sum_{P\in A} m(P).$$
So $\ch_{\M}(A) $ is  the number of points of $\M$ that belong also to $A$. With a slight abuse of notation, we will write $ m(P)=\ch_{\M}(P) $, for any point $ P $.
\end{definition}

 Central to the geometric point of view of linear codes is the idea of a projective system.

\begin{definition}
 A \emph{projective $[n,k,d]_q$ system} is a finite multiset $(\M,m)$  in  $PG(k-1,q)$, whose points do not lie all on a hyperplane, where $n=\sum_{P \in \M} m(P)$ , and $$d=n-\max\left\{ \ch_\M(H) \mid H \subset PG(k-1,q), \dim(H)=k-2\right\}.$$
Two  \emph{projective $[n,k,d]$ systems} $(\M,m)$ and $(\M',m')$ are said to be \emph{equivalent} if there exists a projective isomorphism $\phi$ of $PG(k-1,q)$ mapping $\M$ to $\M'$ which preserves the multiplicities, i.e. such that $m(P)=m'(\phi(P))$ for every $P \in \M$.
\end{definition}

Let $ \C $  be an $[n,k]_q$ code with $ k\times n $ generator matrix $ G $. Note that multiplying any column of $ G $ by a nonzero field element yields a generator matrix for a code which is equivalent to $ \C $. Consider the (multi)set of one-dimensional subspaces of $ \F_q^n $ spanned by the columns of $ G $. In this way the columns may be considered as a multiset $ (\M,m) $ of points in $ PG(k-1,q) $, where the weight function $m$ keeps track of how many times a certain column appears in the generator matrix, up to scalar multiple.

For any nonzero vector $ v=(v_1,v_2,\ldots,v_k) $ in $ \F_q^k $, it follows that the projective hyperplane
\[
v_1x_1+v_2x_2+\cdots + v_kx_k=0
\]
 contains $ |\M|-w $ points of $ \M $ if and only if the codeword $ vG $ has weight $ w $. Therefore,   linear non-degenerate  $ [n,k,d]_q $ codes and projective $ [n,k,d]_q $  systems are equivalent objects. Indeed, the procedure described above gives a  correspondence between  $[n,k,d]_q$ codes up to (monomial) equivalence and projective $[n,k,d]_q$ systems up to (projective) equivalence \cite[Theorem 1.1.6]{MR1186841}. This can be formally stated as follows. We denote by $(\Phi,\Psi)$ the correspondence
 $$\{ \mbox{ classes of non-deg. } [n,k,d]_q \mbox{ codes }\} \longleftrightarrow  \{ \mbox{ classes of projective } [n,k,d]_q \mbox{ systems }\}. $$
 More specifically, for a class of non-degenerate $[n,k,d]_q$ code $[\C]$, $\Phi([C])$ is the (equivalence class of the) multiset obtained by taking the columns with multiplicities of any generator matrix of any representative of $[C]$, while $\Psi$ is the functor that does the inverse operation. Given an equivalence class of  multisets $[(\M,m)]$ in $PG(k-1,q)$, it returns the class containing the code whose generator matrix has the points of $\M$, taken with multiplicities, as columns.
 It is not difficult to see that $(\Phi,\Psi)$ is an equivalence of the two categories (see \cite{assmus1998category} for a detailed discussion on the category of linear codes).
 
\section{Cutting blocking sets and minimal codes}\label{sec:cuttingsets}

Cutting blocking sets have been introduced by Bonini and Borello in \cite{bonini2019minimal}, for the construction of a particular family of minimal codes. However, we will show that minimal codes and cutting blocking sets are the same objects, under the equivalence $(\Phi, \Psi)$  between (non-degenerate) linear codes and  projective systems.

First we recall some basic background on blocking sets.

\begin{definition}Let $t, r, N$ be positive integers with $r<N$. A
\emph{$t$-fold $r$-blocking set} in $PG(N,q)$ is a set $\M\subseteq PG(N,q)$ such that for every $(N-r)$-flat $\Lambda$ of $PG(N,q)$ we have $|\Lambda \cap \M|\geq t$. When $r=1$, we will  refer to it as a \emph{$t$-fold blocking set}. When $t=1$, we will  refer to it as an \emph{$r$-blocking set}. Finally, \emph{blocking sets} are the ones with $r=t=1$.
\end{definition}

\begin{definition}
Let $r, N$ be positive integers with $r<N$. An $r$-blocking set $\M$ in $PG(N,q)$ is called \emph{cutting} if for every pair of $(N-r)$-flats $\Lambda, \Lambda'$ of $PG(N,q)$ we have
$$ \M \cap \Lambda \subseteq \M \cap \Lambda' \,\, \Longleftrightarrow \,\, \Lambda =\Lambda'.$$

Moreover, a cutting $r$-blocking set $\M$ is called \emph{minimal} if for every $P \in \M$, the set $\M\setminus\{P\}$ is not a cutting $r$-blocking set.
\end{definition}

The following result gives a different characterization of cutting blocking sets. The result follows also from \cite[Theorem 3.5]{bonini2019minimal}.

\begin{proposition}\label{prop:charactcutting}
 A set $\M \subseteq PG(N,q)$ is a cutting $r$-blocking set if and only if for every $(N-r)$-flat $\Lambda$ of $PG(N,q)$ we have $\langle \M \cap \Lambda \rangle =\Lambda$. \\
In particular, a cutting $r$-blocking set in $PG(N,q)$ is an $(N-r+1)$-fold blocking set.
\end{proposition}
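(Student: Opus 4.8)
The plan is to prove the two directions of the equivalence separately, and then deduce the "in particular" clause as an immediate corollary. Throughout, fix an $(N-r)$-flat $\Lambda$ of $PG(N,q)$ and note that $\langle \M \cap \Lambda \rangle \subseteq \Lambda$ always holds trivially, so the content is whether this containment is an equality.

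First I would handle the easier implication: if $\langle \M \cap \Lambda \rangle = \Lambda$ for every $(N-r)$-flat $\Lambda$, then $\M$ is a cutting $r$-blocking set. The blocking property is immediate, since $\langle \M\cap\Lambda\rangle=\Lambda\neq\emptyset$ forces $\M\cap\Lambda\neq\emptyset$. For the cutting condition, suppose $\M \cap \Lambda \subseteq \M \cap \Lambda'$ for two $(N-r)$-flats. Taking spans gives $\Lambda = \langle \M \cap \Lambda \rangle \subseteq \langle \M \cap \Lambda' \rangle = \Lambda'$, and since two flats of the same dimension with $\Lambda\subseteq\Lambda'$ must coincide, we get $\Lambda = \Lambda'$.

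The reverse implication is where the work lies. Assume $\M$ is a cutting $r$-blocking set and suppose, for contradiction, that for some $(N-r)$-flat $\Lambda$ we have $\Lambda' := \langle \M \cap \Lambda \rangle \subsetneq \Lambda$, a flat of dimension $N-r-1$ at most. The key step is to produce a \emph{different} $(N-r)$-flat $\Lambda''$ with $\M \cap \Lambda \subseteq \M \cap \Lambda''$, contradicting the cutting property. The natural candidate is to pick any $(N-r)$-flat $\Lambda''$ with $\Lambda' \subseteq \Lambda'' $ and $\Lambda'' \neq \Lambda$; such a flat exists because $\dim\Lambda' \le N-r-1 < N-r$, so $\Lambda'$ lies in more than one $(N-r)$-flat (there are at least two ways to extend a flat of smaller dimension to dimension $N-r$ inside $PG(N,q)$, as $N-r<N$). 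Then $\M \cap \Lambda \subseteq \Lambda' \subseteq \Lambda''$, and since the points of $\M\cap\Lambda$ lie in $\M$, we get $\M\cap\Lambda \subseteq \M \cap \Lambda''$ with $\Lambda\neq\Lambda''$ — contradicting that $\M$ is cutting. The main obstacle is the bookkeeping on dimensions: one must check carefully that $\Lambda' \subsetneq \Lambda$ genuinely gives $\dim \Lambda' \le N-r-1$ and hence room to choose such a $\Lambda''$ distinct from $\Lambda$; this uses only $r \ge 1$ and $r < N$.

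Finally, for the "in particular" statement, let $\M$ be a cutting $r$-blocking set and let $\Lambda$ be any $(N-r)$-flat. By the characterization just proved, $\langle \M \cap \Lambda \rangle = \Lambda$, so $\M \cap \Lambda$ contains a set of points spanning $\Lambda \cong PG(N-r,q)$; any spanning set of $PG(N-r,q)$ has at least $N-r+1$ points. Hence $|\M \cap \Lambda| \ge N-r+1$ for every $(N-r)$-flat $\Lambda$, which is precisely the definition of an $(N-r+1)$-fold blocking set. $\qed$
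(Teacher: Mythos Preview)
Your proof is correct and follows essentially the same route as the paper's: both directions of the equivalence are argued exactly as in the paper (span inclusion plus equal dimensions for $(\Leftarrow)$; extend the proper subflat $\langle \M\cap\Lambda\rangle$ to a second $(N-r)$-flat for $(\Rightarrow)$), with your version being slightly more careful about why such a second flat exists. You additionally spell out the ``in particular'' clause, which the paper leaves implicit; note only that, by the paper's conventions, what you actually prove is that $\M$ is an $(N-r+1)$-fold \emph{$r$-blocking} set, which of course implies the stated $(N-r+1)$-fold blocking property on hyperplanes.
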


\begin{proof}
\begin{itemize}
\item[($\Leftarrow$)] Let $\Lambda, \Lambda'$ be $(N-r)$-flats of $PG(N,q)$, such that $\M \cap \Lambda \subseteq \M\cap \Lambda'$. Then $\Lambda= \langle \M \cap \Lambda \rangle \subseteq \langle \M \cap \Lambda' \rangle = \Lambda'$, and since $\Lambda$ and $\Lambda'$ have the same dimension, we get $\Lambda=\Lambda'$, i.e. $\M$ is a cutting $r$-blocking set.
\item[($\Rightarrow$)] Suppose by contradiction that there exists an $(N-r)$-flat $\Lambda$ such that $\langle \Lambda \cap \M \rangle = \Delta \subsetneq \Lambda$. Then, for every $(N-r)$-flat $\Lambda'$ containing $\Delta$ we have
$\Lambda'\cap \M \supseteq  \Delta \cap \M  =\Lambda \cap \M$. And therefore, $\M$ is not a cutting $r$-blocking set.
\end{itemize}
\end{proof}

\begin{theorem}\label{thm:correspondence}
 Equivalence classes of $[n,k,d]_q$ minimal codes are in correspondence with equivalence classes of projective $[n,k,d]_q$  systems $(\M,m)$ such that $\M$ is a cutting blocking set via $(\Phi,\Psi)$. \\
 Furthermore, via the same pair of functors $(\Phi,\Psi)$, equivalence classes of $[n,k,d]_q$ reduced minimal codes are in correspondence with projective $[n,k,d]_q$  systems $(\M,m)$ such that $\M$ is a minimal cutting blocking set and $m(P)=1$ for every $P \in \M$.
\end{theorem}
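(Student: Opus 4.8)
The plan is to work through the correspondence $(\Phi,\Psi)$ and translate the definition of a minimal code, verbatim, into a statement about the associated multiset. Fix a non-degenerate $[n,k,d]_q$ code $\C$ with generator matrix $G$, and let $(\M,m)$ be its associated multiset in $PG(k-1,q)$. The key dictionary entry, already recorded in the excerpt, is that a nonzero $v\in\F_q^k$ gives a codeword $c=vG$ whose zero coordinates correspond precisely to the columns of $G$ lying on the hyperplane $H_v:\ v_1x_1+\cdots+v_kx_k=0$. Hence $\supp(c)$ is the complement (within the index set, counted with multiplicity) of $\M\cap H_v$. So for two nonzero codewords $c=vG$, $c'=v'G$ we have $\supp(c)\subseteq\supp(c')$ if and only if every column \emph{off} $H_{v'}$ is also off $H_v$, i.e. $\M\cap H_{v'}\subseteq \M\cap H_v$ (here one must be slightly careful: because $\M$ carries multiplicities, "the column is off $H_v$" depends only on the point it spans, so the inclusion really is a set inclusion of $\M\cap H_{v'}$ in $\M\cap H_v$ as sets of points). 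On the other side, $c=\lambda c'$ for some $\lambda\in\F_q$ (including $\lambda=0$, but nondegeneracy and $c\neq 0$ rule that out for the two interesting directions) if and only if $v=\lambda v'$, which since we are in projective space means $H_v=H_{v'}$.

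With this translation in hand, the first assertion falls out by comparing definitions. $\C$ is minimal iff for all codewords $c,c'$, $\supp(c)\subseteq\supp(c')\iff c=\lambda c'$; by the dictionary this is: for all hyperplanes $H,H'$ of $PG(k-1,q)$, $\M\cap H'\subseteq \M\cap H\iff H=H'$. That is exactly the definition of $\M$ being a cutting $1$-blocking set (taking $r=1$, $N=k-1$, so $(N-r)$-flats are hyperplanes), \emph{provided} we also know $\M$ is a blocking set, i.e. meets every hyperplane. But that is forced: if some hyperplane $H$ missed $\M$ entirely, then $\M\cap H=\emptyset\subseteq \M\cap H'$ for every $H'$, contradicting the cutting condition as soon as $k\geq 2$ (there is more than one hyperplane). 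Conversely a cutting blocking set is in particular a blocking set by definition. One should spend a sentence noting that the correspondence $(\Phi,\Psi)$ is already known (by the cited \cite[Theorem 1.1.6]{MR1186841}) to be a bijection between equivalence classes of non-degenerate $[n,k,d]_q$ codes and equivalence classes of projective $[n,k,d]_q$ systems, and that both the minimality condition on the code side and the cutting-blocking-set condition on the geometry side are invariant under the respective equivalences (monomial equivalence on one side, projective equivalence preserving multiplicities on the other), so the bijection restricts to the subclasses in question.

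For the second assertion, I would first handle the multiplicity claim: puncturing $\C$ at coordinate $i$ corresponds, on the geometry side, to deleting one copy of the point $P$ spanned by the $i$-th column, i.e. passing from $(\M,m)$ to the multiset with $m(P)$ decreased by one (removing $P$ from $\M$ if $m(P)$ drops to $0$). Here one must check the punctured code is still non-degenerate and still has the same dimension $k$ (true as long as $\C$ is non-degenerate and $k\geq 2$, since deleting one coordinate cannot drop a non-degenerate code's dimension when the remaining columns still span). Now, if some point $P\in\M$ has $m(P)\geq 2$, then deleting one copy of $P$ leaves a multiset with the \emph{same underlying set of points}, hence the same intersections $\M\cap H$ as sets, hence still a cutting blocking set, hence (by the first part) the punctured code $\C_i$ is still minimal — so $\C$ is not reduced. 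Thus a reduced minimal code must have $m\equiv 1$. Conversely, if $m\equiv 1$, then puncturing at coordinate $i$ deletes the point $P_i$ outright, and $\C_i$ is minimal iff $\M\setminus\{P_i\}$ is a cutting blocking set; so "$\C_i$ not minimal for every $i$", i.e. $\C$ reduced, is equivalent to "$\M\setminus\{P\}$ not a cutting blocking set for every $P\in\M$", i.e. $\M$ is a \emph{minimal} cutting blocking set. Combining, the reduced minimal $[n,k,d]_q$ codes correspond exactly to the projective systems with $m\equiv 1$ whose support is a minimal cutting blocking set.

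I expect the main obstacle to be bookkeeping around multiplicities rather than anything conceptual: one has to be careful that "support inclusion of codewords" translates to a \emph{set} inclusion of the point-intersections (multiplicities are irrelevant to whether a coordinate is zero), whereas "reduced" genuinely sees multiplicities (puncturing removes one copy). The cleanest way to avoid slipping is to prove once and for all the lemma "$\supp(vG)\subseteq\supp(v'G)\iff \M\cap H_{v'}\subseteq\M\cap H_v$ as sets of points, and $vG\in\F_q v'G\iff H_v=H_{v'}$", and then never unfold it again. A minor secondary point to verify is that the punctured code remains non-degenerate of dimension $k$, so that it still lives under the same correspondence $(\Phi,\Psi)$ and the word "minimal" applies to it in the intended sense.
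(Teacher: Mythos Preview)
Your proof is correct and follows essentially the same route as the paper's: establish the dictionary $\supp(vG)\supseteq\supp(v'G)\iff \M\cap H_v\subseteq \M\cap H_{v'}$ and $vG\in\F_q^* v'G\iff H_v=H_{v'}$, then read off the first statement from the definitions, and obtain the second from the observation that puncturing corresponds to deleting one copy of a point. You simply supply considerably more detail than the paper does (in particular the argument that $m\equiv 1$ is forced for reduced codes, and the check that the punctured code retains dimension $k$), but the underlying idea is identical.
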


\begin{proof}
 The first statement follows from the definitions of the two objects. Hyperplanes $\langle v \rangle^\perp$ in $PG(k-1,q)$ correspond to linearly independent codewords $vG$ of $\C$. For any pair of hyperplanes $H=\langle v \rangle^\perp$ and $H'=\langle v' \rangle^\perp$ we have $\M \cap H\subseteq \M \cap H'$ if and only if $\supp(vG)\supseteq \supp(v'G)$, where $G$ is any generator matrix of $\C$ and $(\M,m)$ is the associated projective system.
 
 Moreover, since puncturing on a coordinate of a code whose generator matrix is $G$ coincides to removing the corresponding point from the multiset $(\M,m)$, we get the second statement.
\end{proof}

Observe that reduced minimal codes correspond to multisets $(\M,m)$ with no multiplicity, i.e. such that $m(P)=1$ for every $P \in \M$. In particular, in order to construct minimal codes, by Theorem \ref{thm:correspondence} we only need to construct classical sets, without multiplicity. Therefore, from now on we will drop the multiplicity map from the notation when not necessary, and we will only talk about sets $\M \subseteq PG(N,q)$.

\section{Bounds on length and distance of minimal codes}\label{sec:bounds}

It is natural to ask for which values $R$ we can produce minimal codes of rate $R$. It is in general easier to construct minimal codes
with very small rate, such as symplex codes or related codes as in \cite{bartoli2019minimalLin,bonini2019minimal}. However, a priori it is not clear if one can do it for arbitrary rates. In particular, for a given dimension $k$ one would like to determine what is the smallest length $n$ (and hence the largest rate $R=k/n$) such that an $[n,k]_q$ code exists. In this section we provide some partial answers to these questions, proving some bounds on the length and the minimum distance of a minimal code for a fixed dimension. The characterization given in Theorem \ref{thm:correspondence} plays a crucial role in dealing with these problems.

\begin{theorem}\label{prop:lowboundgeo}
Let $\C$ be an $[n,k]_q$ minimal code. Then
$$n \geq (k-1)q+1.$$
\end{theorem}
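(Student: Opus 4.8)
The plan is to translate the statement to the geometric side via Theorem~\ref{thm:correspondence} and then run a double-counting argument on point--hyperplane incidences. Let $(\M,m)$ be a projective $[n,k,d]_q$ system attached to $\C$; by Theorem~\ref{thm:correspondence} the set $\M\subseteq PG(k-1,q)$ is a cutting blocking set, and $n=\sum_{P\in\M}m(P)\ge|\M|$. So it suffices to show that every cutting blocking set $\M$ in $PG(k-1,q)$ has $|\M|\ge (k-1)q+1$; the multiplicities can only help.

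First I would record, via Proposition~\ref{prop:charactcutting} (the case $r=1$), that $\langle\M\cap H\rangle=H$ for every hyperplane $H$ of $PG(k-1,q)$; since spanning a $(k-2)$-dimensional projective space needs at least $k-1$ points, this gives $|\M\cap H|\ge k-1$ for all $H$ (equivalently, $\M$ is a $(k-1)$-fold blocking set, as already noted in that proposition). Then I would count incidences between $\M$ and hyperplanes in two ways: summing $|\M\cap H|$ over all $\theta_q(k-1)$ hyperplanes and interchanging the order of summation, each point of $\M$ lies on exactly $\theta_q(k-2)$ hyperplanes, so
\[
|\M|\,\theta_q(k-2)=\sum_{H}|\M\cap H|\ \ge\ (k-1)\,\theta_q(k-1),
\]
whence $|\M|\ge (k-1)\dfrac{\theta_q(k-1)}{\theta_q(k-2)}=(k-1)\dfrac{q^{k}-1}{q^{k-1}-1}$.

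Finally I would turn this into an integer bound: from $q^{k}-1=q(q^{k-1}-1)+(q-1)$ one gets $\dfrac{q^{k}-1}{q^{k-1}-1}=q+\dfrac{q-1}{q^{k-1}-1}>q$, so $|\M|>(k-1)q$, and since $|\M|$ is an integer this forces $|\M|\ge (k-1)q+1$, hence $n\ge(k-1)q+1$. (For $k=2$ the ratio equals exactly $q+1$ and the bound is met by the simplex code, which is a reassuring check.)

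I do not anticipate a real obstacle: the argument is essentially an averaging estimate. The two places that need a little care are justifying $|\M\cap H|\ge k-1$ from the cutting/spanning property rather than from a weaker blocking-set estimate, and the final rounding, where one must observe that the average exceeds $(k-1)q$ by a positive but (for $k\ge 3$) strictly sub-unit amount. A more hands-on alternative --- project $\M$ from one of its points to obtain a cutting blocking set in $PG(k-2,q)$ and induct on $k$ --- is valid but lossy, yielding only $|\M|\ge (k-2)q+2$, so the global incidence count is the efficient route.
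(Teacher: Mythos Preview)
Your proposal is correct and essentially identical to the paper's own proof: both pass to the associated projective system, use Proposition~\ref{prop:charactcutting} to get $|\M\cap H|\ge k-1$ for every hyperplane, double-count point--hyperplane incidences to obtain $|\M|\,\theta_q(k-2)\ge (k-1)\,\theta_q(k-1)$, and then round up to $(k-1)q+1$. The only cosmetic difference is that the paper carries the multiplicities through the count (so the left-hand side is $n\,\theta_q(k-2)$ directly), whereas you first reduce to the unweighted set via $n\ge|\M|$; the arithmetic and the final ceiling step are the same.
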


\begin{proof}
If $k=1$ there is nothing to prove, hence we assume $k\geq 2$. Choose a generator matrix, and the corresponding  projective $ [n,k]_q $  system $(\M,m)$  in $ \Pi=PG(k-1,q)$.
 Consider the set $ S $ of incident point-hyperplane pairs $ (P,\Lambda) $ in $\Pi$, where $ P\in \M $. Summing over all the points of $ \M $ we obtain
\begin{equation}\label{eq1}
|S| = \sum_{P\in \M} m(P) \theta_q(k-2) = n  \theta_q(k-2),
\end{equation}
since $\theta_q(k-2)$ is the number of hyperplanes through a point.\\
On the other hand, summing over the set $\Gamma$ of all the hyperplanes of $ \Pi $ we get

\begin{equation}\label{eq2}
|S| =\sum_{H\in \Gamma} \ch_\M(H) \ge \sum_{H \in \Gamma} (k-1)= (k-1) \theta_q(k-1),
\end{equation}
where the inequality follows from the fact that $(\M, m)$  is in particular a $(k-1)$-fold blocking set in $\Pi$, by Proposition \ref{prop:charactcutting}.
Combining   \eqref{eq1} and \eqref{eq2}, we obtain
$$n \geq \left\lceil (k-1)\frac{\theta_q(k-1)}{\theta_{q}(k-2)} \right\rceil, $$
We then conclude observing that $\left\lceil (k-1)\frac{\theta_q(k-1)}{\theta_{q}(k-2)} \right\rceil=(k-1)q+\left\lceil\frac{k-1}{\theta_{q}(k-2)} \right\rceil=(k-1)q+1$.
\end{proof}

As a consequence, we get an asymptotic improvement of a result by Chabanne, Cohen and  Patey \cite{chabanne2013towards}. In that work, they showed that the rate $R$ of an  $[n,Rn]_q$ minimal code for $n$ large enough satisfies $R\leq \log_q(2)$, calling this bound the \emph{Maximal bound}.

\begin{corollary}
If $\C$ is a minimal code of rate $R$, asymptotically it holds $R\leq\frac{1}{q}$.
\end{corollary}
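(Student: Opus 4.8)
The plan is to deduce the asymptotic rate bound directly from the length bound in Theorem \ref{prop:lowboundgeo}. First, I would set up the asymptotic framework: suppose we have a family of minimal codes $\C_n$ with parameters $[n, k_n]_q$ where $k_n = R_n n$ and $R_n \to R$ as $n \to \infty$. The content of "rate $R$" in the asymptotic sense is that $R = \limsup_n k_n/n$ along an infinite family.

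Next, I would apply Theorem \ref{prop:lowboundgeo} to each code in the family, obtaining $n \geq (k_n - 1)q + 1$ for every $n$. Rearranging, this gives $k_n \leq \frac{n-1}{q} + 1$, hence
\begin{equation*}
\frac{k_n}{n} \leq \frac{1}{q} - \frac{1}{qn} + \frac{1}{n} = \frac{1}{q} + \frac{q-1}{qn}.
\end{equation*}
Taking the limit (or limsup) as $n \to \infty$, the error term $\frac{q-1}{qn}$ vanishes, so $R \leq \frac{1}{q}$.

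I do not anticipate a genuine obstacle here — this corollary is an immediate consequence of the linear (in $k$) lower bound on $n$, and the only "work" is bookkeeping with the floor/ceiling and the $+1$ terms, which are lower-order. The one point requiring a word of care is making sure the statement is interpreted asymptotically (for a single fixed code the exact inequality $n \geq (k-1)q+1$ is what holds, and it does not quite say $k/n \leq 1/q$); phrasing it as "asymptotically $R \leq 1/q$" is exactly what the clean limit argument delivers. I would also remark that this recovers and strengthens the Maximal bound $R \leq \log_q 2$ of Chabanne, Cohen and Patey, since $\frac{1}{q} \leq \log_q 2$ for all prime powers $q \geq 2$, with equality only at $q = 2$.
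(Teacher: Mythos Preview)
Your proof is correct and follows essentially the same route as the paper: apply Theorem \ref{prop:lowboundgeo}, rearrange to bound $k/n$ by $\frac{1}{q}$ plus a term of order $1/n$, and let $n\to\infty$. One minor slip in your closing remark: at $q=2$ one has $\frac{1}{q}=\frac12$ while $\log_q 2=1$, so the inequality $\frac{1}{q}\le \log_q 2$ is strict there too, not an equality.
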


\begin{proof}
Let $\C$ be a minimal code of rate $R$. Then, by Theorem \ref{prop:lowboundgeo}
$$R=\frac{k}{n} \leq \frac{n+q-1}{qn} \longrightarrow \frac{1}{q},$$
as $n$ goes to infinity.
\end{proof}

We now prove an important result relating the minimum distance with the dimension of a minimal code and the size of the underlying field. We will give two different proofs of the theorem, to document further the interest of the geometric characterization.

\begin{theorem}\label{thm:dgeqkq2}
Let $\C$ be an $[n,k,d]_q$  minimal code with $k\geq 2$. Then $d\geq k+q-2$.
\end{theorem}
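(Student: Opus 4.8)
The plan is to use the geometric characterization from Theorem~\ref{thm:correspondence} together with the double-counting idea already employed in the proof of Theorem~\ref{prop:lowboundgeo}. Let $\C$ be an $[n,k,d]_q$ minimal code with $k\geq 2$, and let $\M\subseteq PG(k-1,q)$ be the associated cutting blocking set. Recall that $d=n-\max_H \ch_\M(H)$ over hyperplanes $H$, so it suffices to exhibit a hyperplane $H_0$ with $\ch_\M(H_0)\leq n-(k+q-2)$, or equivalently to show that $n-\ch_\M(H_0)\geq k+q-2$ for some $H_0$; since $n-\ch_\M(H)$ is the weight of the corresponding codeword, this amounts to bounding the minimum weight from below. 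The point $H_0$ at which the maximum character is attained is the natural candidate.

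First I would fix a hyperplane $H$ realizing $\ch_\M(H)=n-d$, so $|\M\setminus H|=d$ (counting with multiplicity; by Theorem~\ref{thm:correspondence} we may actually assume multiplicities are~$1$ if we pass to the reduced code, but the bound on $d$ is unaffected by puncturing multiplicities so it is cleanest to work with the set $\M$ directly). Now I want to understand the structure of $\M\cap H$ as a configuration inside $H\cong PG(k-2,q)$. The key leverage is Proposition~\ref{prop:charactcutting}: for \emph{every} hyperplane $\Lambda$ of $PG(k-1,q)$ we have $\langle\M\cap\Lambda\rangle=\Lambda$. Consider the $(k-3)$-flats (codimension-$2$ flats of the ambient space) contained in $H$; each such flat $\Pi$ lies in exactly $q+1$ hyperplanes of $PG(k-1,q)$, one of which is $H$ itself. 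For each of the other $q$ hyperplanes $\Lambda\supseteq\Pi$, the condition $\langle\M\cap\Lambda\rangle=\Lambda$ forces $\M$ to contain a point outside $\Pi$ but in $\Lambda$ — and hence, since we can choose $\Pi$ cleverly, outside $H$. By choosing $\Pi$ to be a hyperplane of $\langle\M\cap H\rangle$... (here one uses that $\langle\M\cap H\rangle = H$) and counting the points of $\M\setminus H$ distributed among the $q$ "other" hyperplanes through $\Pi$, one should extract at least $q$ points, plus the $k-2$ points needed to span $H$ with one more to span the ambient space. The arithmetic I expect: the $d$ points outside $H$ must, together with $\M\cap H$, span all of $PG(k-1,q)$, which already gives $d\geq 1$; refining, pick a $(k-3)$-flat $\Pi\subset H$ with $\langle(\M\cap H)\setminus\Pi\rangle$ still equal to $H$... the bound $d\geq k+q-2$ should emerge as "$q$ from the $q$ secant hyperplanes through a well-chosen $\Pi$" plus "$k-2$ from a spanning/dimension argument."

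The main obstacle will be making the two contributions — the "$q$" coming from the cutting condition applied to the pencil of hyperplanes through a codimension-$2$ flat, and the "$k-2$" coming from dimension-counting inside $H$ — genuinely additive rather than overlapping, i.e. ensuring the $q$ points produced by the pencil argument are distinct from each other and from the $k-2$ points forced by spanning. Concretely, I would induct on $k$: the base case $k=2$ asserts $d\geq q$, which follows because $\M$ must be a $(k-1)=1$-fold... actually a cutting blocking set in $PG(1,q)$ must be all $q+1$ points (every point is a hyperplane and must be spanned, forcing $\M\cap\{P\}\neq\emptyset$ for all $P$), giving $n=q+1$, $d = q$; the inductive step projects $\M$ from a suitable point of $\M\cap H$ into $PG(k-2,q)$, checks the image is again a cutting blocking set (this is the delicate verification — projection of a cutting blocking set from one of its own points should remain cutting, using Proposition~\ref{prop:charactcutting}), and relates the minimum distances. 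I expect the cleanest writeup to be the inductive one, with the projection-stays-cutting lemma doing the real work; the double-counting sketch above is a useful sanity check on the constant $k+q-2$.
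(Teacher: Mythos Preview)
Your induction-by-projection idea is natural but loses one at each step. If you project the cutting blocking set $\M$ from a point $P\in\M\cap H$ (with $H$ a hyperplane of maximum character), the quotient code is indeed an $[n-1,k-1]_q$ minimal code, and for every hyperplane $\Lambda\ni P$ the corresponding weight in the quotient equals $n-\ch_\M(\Lambda)$. Since $H$ itself passes through $P$, the quotient's minimum distance is exactly $d$, so the inductive hypothesis only yields $d\geq (k-1)+q-2=k+q-3$. There is no evident way to recover the missing $+1$ by choosing $P$ differently: projecting from a point outside $H$ can raise the quotient distance, but then you no longer have $d$ equal to it, and you get an inequality in the wrong direction.

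Your first sketch is much closer to what the paper does, and your diagnosis of the obstacle (making the ``$q$'' and the ``$k-2$'' genuinely additive) is exactly right; what you are missing is the device that performs the addition. The paper looks at the set $\M':=\M\setminus H$ of the $d$ points outside $H$ and first proves that $\M'$ spans $PG(k-1,q)$ (otherwise a third hyperplane through $\langle\M'\rangle\cap H$ would violate cutting). Thus $\M'$ is itself a projective $[d,k,d']_q$ system. Your sheaf argument, applied to the pencil through $H\cap H'$ where $H'$ realises $d'$, shows $d'\geq q-1$. The two pieces are then combined not by counting disjoint point sets but by the Singleton bound on $\M'$:
\[
d\ \geq\ k+d'-1\ \geq\ k+q-2.
\]
So the ``$k-2$'' you were looking for comes from Singleton applied to the residual system, not from exhibiting $k-2$ further explicit points. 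The paper's alternative coding-theoretic proof is the same argument phrased as puncturing $\C$ on the zero coordinates of a minimum-weight codeword.
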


\begin{proof}
 Consider the projective $[n,k,d]_q$  system $(\M,m)$ associated to $\C$. Without loss of generality we can assume that there are no multiplicities, i.e. that $\M=\{P_1,\ldots, P_n\}$ with the $P_i$'s pairwise distinct. By Theorem \ref{thm:correspondence}, $\M$ is a cutting blocking set and there exists an hyperplane $H$ such that $\{ P_{d+1},\ldots, P_n\} \subseteq H$ and $P_1,\ldots, P_d \notin H$. Consider the set $\M^\prime:=\{ P_1,\ldots, P_d \}$. First we prove that $\M^\prime$ is a projective system, i.e. that $P_1,\ldots, P_d$ do not belong to the same hyperplane. Indeed, suppose that there exists a hyperplane $K$ such that $ P_1,\ldots, P_d \in K$, then clearly $K \neq H$, and hence $\Lambda:=H \cap K$ is a $(k-3)$-flat. Since there are $q+1$ hyperplanes containing $\Lambda$, there always exists a third hyperplane $T$  different from $H$ and $K$ such that $\Lambda \subseteq T$. Moreover $P_1,\ldots, P_d \notin T$, otherwise we would have $T=K$. Thus, we get $$\M \cap H=\{P_{d+1},\ldots, P_n\} \supseteq  \M \cap T,$$ which contradicts the fact that $\M$ is cutting. Therefore,  $\M^\prime$ is a projective $[d,k,d^\prime]_q$  system. 
 
 We show  now that $d^\prime\geq q-1$. Up to reordering the points, this means that there exists an hyperplane $H^\prime$ such that $P_1,\ldots, P_{d^\prime} \notin H^\prime$ and $P_{d^\prime+1},\ldots, P_d \in H^\prime$.  Consider the $(k-3)$-flat $\Lambda:=H \cap H^\prime$, and the sheaf of hyperplanes containing $\Lambda$. Except from $H$ and $H^\prime$ there are $q-1$ hyperplanes left in this sheaf. Clearly $P_{d^\prime+1},\ldots, P_d \notin \Lambda$, and hence they do not belong to any of the remaining $q-1$ hyperplanes. Moreover, every point in $\{P_1,\ldots, P_{d^\prime}\}$ can be in at most one hyperplane of the sheaf. Assume by contradiction that $d^\prime\leq q-2$, then there exists at least a hyperplane $\tilde{H}  \neq H$ such that
 $P_1,\ldots, P_d \notin \tilde{H}$. Hence 
 $$\M \cap H=\{P_{d+1},\ldots, P_n\} \supseteq  \M \cap \tilde{H},$$
 which contradicts the fact that $\M$ is cutting. Thus, $\M^\prime$
 is a projective $[d,k,d^\prime]_q$  system, with $d'\geq q-1$. Combining it with the Singleton bound \cite{singleton} we obtain
 $$ d \geq k+d^\prime-1\geq k+q-2.$$
\end{proof}

An alternative proof, given from a coding theory point of view, is the following.

\begin{proof}[Second proof.]
 Let $c$ be a codeword of minimum weight $d$ in $\C$. Then consider the code $\C^\prime$ obtained by puncturing $\C$ in all the $n-d$ coordinates where $c$ is $0$. Observe that $\C^\prime$ is a $[d,k]_q$ code. Indeed, if the dimension of $\C^\prime$ is less than $k$, it means that there is at least one codeword $w$ in $\C$ whose support is disjoint from the support of $c$. Hence $\supp(c+w)$ contains $\supp(c)$ and $\supp(w)$ and this contradicts the minimality of $\C$.
 
 Now, observe that $\C^\prime$ has distance $d^\prime \geq q-1$. Indeed, consider $c^\prime\in\C^\prime$ that corresponds to $c$ and  has weight $d$, and  let $u \in \C$ such that the corresponding $u^\prime \in \C^\prime$ is of minimum weight $d^\prime$ in $\C^\prime$. Then, for any $\alpha\in\Fq^\ast$, consider the codeword $c^\prime+\alpha u^\prime$ in $\C^\prime$. If $d^\prime< q-1$, at least one of these codewords has weight $d$. The corresponding codeword in $\C$, then, has support containing $\supp(c)$, which yields a contradiction to the minimality of $\C$.
 
 Finally, we apply the Singleton bound on $\C^\prime$ and combine it with $d^\prime \geq q-1$ to obtain the desired result:
 
$$d\geq d^\prime +k -1 \geq k+q-2.$$
\end{proof}

\begin{remark}\label{rem:minimumdistance}
The  bound in Theorem \ref{thm:dgeqkq2} is not sharp in general: considering the second proof, we remark that $d=k+q-2$ if and only if $\C^\prime$ is a $[q+k-2,k,q-1]_q$ MDS code with exactly $q-1$ codewords of weight equal to the length (namely, all the nonzero multiples of $c^\prime$). Weight enumerators of MDS codes are known (see for example \cite[Ch. 11, \S 3, Theorem 6]{huffmanpless}), so that it is easy to prove that this may happen if and only if 
 $$\sum_{j=0}^{k-1}(-1)^j\cdot \binom{k-1+q-2}{j}\cdot q^{k-1-j}=1$$
 which is not true 
 for $q\neq 2$ and $k\geq 3$. Moreover, one can also observe that assuming the MDS conjecture to be true (see \cite{segre1955curve}), a $[q+k-2,k,q-1]_q$ MDS code exists only for $k \leq 3$. 
 
\end{remark}

As a result, we can actually get new bounds on the length of a minimal code, combining Theorem  \ref{thm:dgeqkq2} with known upper bounds on the minimum distance.
It is easy to observe that using the Singleton bound does not improve on Theorem \ref{prop:lowboundgeo}. However, if $q$ is small, we can get better results using the Griesmer bound \cite{griesmer}.

\begin{corollary}\label{coro:griesmer} Let $\C$ be an $[n,k]_q$ minimal code. Then
$$n\geq \sum _{i=0}^{k-1}\left\lceil {\frac {k+q-2}{q^{i}}}\right\rceil. $$
\end{corollary}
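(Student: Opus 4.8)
The plan is to combine the newly established distance bound $d \geq k+q-2$ from Theorem~\ref{thm:dgeqkq2} with the classical Griesmer bound, which asserts that any $[n,k,d]_q$ linear code satisfies
\[
n \geq \sum_{i=0}^{k-1} \left\lceil \frac{d}{q^i} \right\rceil.
\]
First I would invoke Theorem~\ref{thm:dgeqkq2} to deduce that our minimal code $\C$, having dimension $k \geq 2$, has minimum distance $d \geq k+q-2$. (The case $k=1$ is trivial, since the bound reads $n \geq q-1$, which can be checked directly or simply excluded as in Theorem~\ref{thm:dgeqkq2}.)

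Next I would feed this lower bound on $d$ into the Griesmer bound. The key elementary observation is that for each fixed $i$, the map $d \mapsto \lceil d/q^i \rceil$ is non-decreasing, so replacing $d$ by the smaller quantity $k+q-2$ in every summand only decreases the right-hand side. Hence
\[
n \geq \sum_{i=0}^{k-1} \left\lceil \frac{d}{q^i} \right\rceil \geq \sum_{i=0}^{k-1} \left\lceil \frac{k+q-2}{q^i} \right\rceil,
\]
which is exactly the claimed inequality.

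There is essentially no hard part here: the corollary is a direct substitution, and the only things to be careful about are that the Griesmer bound applies to arbitrary (not necessarily minimal) linear codes so it is legitimately available, and that monotonicity of the ceiling function justifies the term-by-term comparison. One minor point worth a sentence is that if $\C$ were degenerate the statement would still follow a fortiori, but since we have adopted the standing convention that codes are non-degenerate and $k \geq 2$ is needed for Theorem~\ref{thm:dgeqkq2}, the hypothesis $k \geq 2$ should implicitly be in force (or the $k=1$ case handled separately). I do not anticipate any genuine obstacle; the content of the corollary is entirely in Theorem~\ref{thm:dgeqkq2}, with Griesmer playing a purely mechanical role.
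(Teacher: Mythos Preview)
Your proposal is correct and follows exactly the same approach as the paper: the authors' proof is the single sentence ``It follows combining Theorem~\ref{thm:dgeqkq2} with the Griesmer bound,'' and your write-up simply spells out the monotonicity justification and the trivial $k=1$ case.
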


\begin{proof}
It follows combining Theorem \ref{thm:dgeqkq2} with the Griesmer bound.
\end{proof}

\begin{remark}
Observe that for some sets of parameters Corollary \ref{coro:griesmer} gives a better lower bound on the length of minimal codes than the one of Theorem \ref{prop:lowboundgeo}, while for other sets of parameters the converse holds. For instance, it is easy to see that for $q=2$, Corollary \ref{coro:griesmer} is always better. Viceversa, when $q\geq k \geq 4$, Theorem \ref{prop:lowboundgeo} provides better results.

Furthermore, numerical results with {\sc Magma} show that the bound in Corollary \ref{coro:griesmer} is not sharp. For example, for $q=2$ and $k=4$, the minimum possible length of a minimal code is $9$, while the above bound gives $8$.
\end{remark}

\subsection{Asymptotic performance of minimal codes}

We recall that there is an existence result that holds asymptotically, i.e. we can actually ensure the existence of minimal codes of arbitrary length $n$ of a fixed rate $R$ that only depends on $q$. This existence result is not constructive, and it was shown by Chabanne, Cohen and  Patey \cite{chabanne2013towards}.

\begin{theorem}[Minimal Bound \cite{chabanne2013towards}] \label{thm:minimalbound}
For any rate $R=k/n$ such that $$0 \leq R \leq \frac{1}{2}\log_q \left(\frac{q^2}{q^2-q+1} \right),$$  there exists an infinite sequence of $[n, k]_q$ minimal codes.
\end{theorem}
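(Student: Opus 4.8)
The plan is to prove this by a probabilistic/counting argument, following the classical Gilbert–Varshamov paradigm but tailored to the minimality condition rather than the minimum-distance condition. The key observation is that, by the Ashikhmin–Barg Lemma \eqref{AB}, a code is minimal whenever $w_{min}/w_{max} > (q-1)/q$; so it suffices to produce, for arbitrarily large $n$, an $[n,k]_q$ code (with $k \approx Rn$ for the claimed $R$) all of whose nonzero weights lie in an interval $(w_{min}, w_{max}]$ satisfying this ratio bound. Concretely, I would fix a target "center" weight around which to concentrate all nonzero weights: for a random linear code the weight of a fixed nonzero codeword is essentially binomially distributed with mean $(1-1/q)n$, so one expects all weights to cluster near $(1-1/q)n$, and the width of the cluster is what must be controlled so that the ratio of the smallest to the largest stays above $(q-1)/q$.

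First I would set up the random model: pick a generator matrix $G \in \F_q^{k \times n}$ uniformly at random (or equivalently, think of the $2^{k}-1$... rather $q^k-1$ nonzero codewords, each of which, for fixed nonzero message, is uniform on $\F_q^n$). For a single nonzero codeword $c$, the probability that its weight $w(c)$ falls outside a window $[(1-1/q)n - \delta n,\ (1-1/q)n + \delta n]$ is exponentially small in $n$ by a Chernoff/Hoeffding bound — say at most $e^{-\gamma(\delta) n}$ for an explicit rate $\gamma(\delta) > 0$. Then I would union-bound over all $q^k - 1 \le q^k$ nonzero codewords: the probability that \emph{some} nonzero codeword has weight outside the window is at most $q^k e^{-\gamma(\delta) n} = e^{(k \ln q - \gamma(\delta) n)}$. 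As long as $k \ln q < \gamma(\delta)\, n$, i.e. $R = k/n < \gamma(\delta)/\ln q$, this probability is less than $1$ (in fact $\to 0$), so a good code exists for every such $n$, giving an infinite sequence.

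The remaining step — and the one requiring the most care — is to optimize the window half-width $\delta$ so that (i) the Ashikhmin–Barg ratio condition holds and (ii) the resulting rate threshold $\gamma(\delta)/\ln q$ matches exactly the stated bound $\tfrac{1}{2}\log_q\!\big(\tfrac{q^2}{q^2-q+1}\big)$. Condition (i) reads
$$\frac{(1-1/q)n - \delta n}{(1-1/q)n + \delta n} > \frac{q-1}{q},$$
which forces $\delta$ strictly below some explicit multiple of $1/q^2$; so $\delta$ cannot be taken freely large, and this is exactly the tension that caps the rate. For (ii), the sharp large-deviation rate for a sum of $n$ i.i.d. variables that are $0$ with probability $1/q$ and nonzero with probability $1-1/q$, deviating to the boundary value of the window, is given by a relative-entropy (Kullback–Leibler) expression; plugging in the critical $\delta$ and simplifying the entropy term is where the quantity $\tfrac{q^2}{q^2-q+1}$ should emerge. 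I expect this entropy computation — correctly identifying which tail bound is tight at the critical window and checking that it collapses to the claimed closed form — to be the main obstacle; everything else (the union bound, the existence conclusion, extracting an infinite sequence by letting $n$ range over all sufficiently large integers with $k = \lceil Rn \rceil$) is routine. Note this argument is non-constructive, consistent with the remark preceding the statement.
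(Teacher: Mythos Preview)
Note first that the paper does not prove this theorem: it is quoted from \cite{chabanne2013towards} without proof, so there is no in-paper argument to compare against. What follows evaluates your outline against the argument that actually yields the stated constant.

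Your approach has a genuine gap: routing the probabilistic argument through the Ashikhmin--Barg condition \eqref{AB} will \emph{not} recover the threshold $\tfrac{1}{2}\log_q\!\big(\tfrac{q^2}{q^2-q+1}\big)$, because \eqref{AB} is only sufficient for minimality and the random-coding exponent you obtain reflects that strictly smaller class of codes. For $q=2$, for instance, optimizing your window subject to $w_{\min}/w_{\max}>1/2$ forces the boundary case $[\alpha n,\beta n]=[\tfrac{n}{3},\tfrac{2n}{3}]$, and the resulting rate is $D(1/3\,\|\,1/2)=\tfrac{5}{3}-\log_2 3\approx 0.082$, well below the target $\tfrac{1}{2}\log_2(4/3)\approx 0.208$. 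No choice of window closes this gap, so the ``entropy computation'' you flag as the main obstacle would in fact fail to produce the claimed closed form. The argument that does give the constant bypasses \eqref{AB} entirely and bounds the failure of minimality directly: for a uniformly random $G\in\F_q^{k\times n}$ and linearly independent $v,v'\in\F_q^k$, the pair $(vG,v'G)$ is uniform on $(\F_q^n)^2$, and a coordinate-wise computation gives
\[
\Pr\big(\supp(vG)\subseteq\supp(v'G)\big)=\left(1-\frac{q-1}{q^2}\right)^{\!n}=\left(\frac{q^2-q+1}{q^2}\right)^{\!n}.
\]
A union bound over the fewer than $q^{2k}$ ordered linearly independent pairs then shows the random code is minimal with positive probability whenever $2k<n\log_q\!\big(\tfrac{q^2}{q^2-q+1}\big)$, which is exactly the claimed rate. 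Your overall framework (random $G$, union bound, non-constructive existence) is correct; what needs to change is the event you bound --- support containment rather than weight deviation.
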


The most important consequence of Theorem \ref{thm:dgeqkq2} is that it allows to show that minimal codes are asymptotically good. Let us recall that a family of codes is said \emph{asymptotically good} if it contains a sequence $C=(\C_1,\C_2, \dots)$ of linear codes, where $C_n$ is an $[n,k_n,d_n]_q$ code such that the rate $R$ and the relative distance $\delta$ of $C_n$, that is
$$R:= \liminf_{n\to \infty} {\frac{k_n}{n}} \ \ \ \ \text{ and } \ \ \ \ 
    \delta := \liminf_{n\to \infty} {\frac{d_n}{n}},
$$
are both positive.

In general, we would like ideally both rate and relative distance of a code to be as large as possible, since the rate measures the number of information coordinates with respect to the length of the code and the relative distance measures the error correction capability of the code. Determining the rate and the relative distance for a class of codes is in general a difficult task. For example, it is still unknown if the family of cyclic codes is asymptotically good. 
However, some families of asymptotically good codes are known to exist. For example, codes that meet the Asymptotic Gilbert-Varshamov bound, binary quasi-cyclic codes \cite{chen1969some,alahmadi2017long}, self-dual codes \cite{alahmadi2018self}, group codes \cite{borello2019group}.

A direct consequence of Theorem \ref{thm:dgeqkq2} and of the Minimal Bound of Theorem \ref{thm:minimalbound} is the following result.

\begin{theorem}\label{thm:asymgood}
Minimal codes are asymptotically good.
\end{theorem}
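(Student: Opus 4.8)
The plan is to combine the lower bound on the minimum distance from Theorem \ref{thm:dgeqkq2} with the existence result of the Minimal Bound (Theorem \ref{thm:minimalbound}). The Minimal Bound guarantees an infinite sequence of $[n,k_n]_q$ minimal codes with rate bounded below by the explicit constant $R_0 := \frac{1}{2}\log_q\!\left(\frac{q^2}{q^2-q+1}\right)$, which is strictly positive for every prime power $q$. So fix such a sequence $C = (\C_1, \C_2, \dots)$ with $k_n/n \to R$ for some $R$ with $0 < R \le R_0$ (passing to a subsequence if necessary so that the limit exists, and discarding the finitely many terms with $k_n = 1$ so that Theorem \ref{thm:dgeqkq2} applies). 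This takes care of the rate: $\liminf k_n/n = R > 0$.

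Next I would control the relative distance. For each code $\C_n$ in the sequence, Theorem \ref{thm:dgeqkq2} gives $d_n \ge k_n + q - 2 \ge k_n$. Dividing by $n$ yields $d_n/n \ge k_n/n$, and since $k_n/n \to R$ we get $\liminf_{n\to\infty} d_n/n \ge R > 0$. Hence the relative distance $\delta$ of the sequence is at least $R$, which is positive. Both the rate and the relative distance are therefore positive, so this infinite sequence of minimal codes witnesses that the family is asymptotically good.

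There is essentially no obstacle here: the entire argument is a two-line consequence of the two quoted theorems, and the only mild care needed is bookkeeping — ensuring the sequence produced by Theorem \ref{thm:minimalbound} can be taken with $k_n \ge 2$ and with $k_n/n$ convergent (or simply working with $\liminf$ throughout, which is how rate and relative distance are defined in the paper anyway, so even that is automatic). One could also phrase it slightly more sharply: since $d_n \ge k_n + q - 2$, the sequence in fact has relative distance $\delta \ge R$ and rate $R$ simultaneously, i.e. it lies on or above the line $\delta = R$ in the rate–distance plane, which is a clean qualitative statement worth recording. The real content of the result is Theorem \ref{thm:dgeqkq2}; Theorem \ref{thm:asymgood} is its harvest.
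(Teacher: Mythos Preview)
Your proposal is correct and follows exactly the approach the paper intends: the paper itself presents Theorem~\ref{thm:asymgood} as a direct consequence of Theorem~\ref{thm:dgeqkq2} and the Minimal Bound (Theorem~\ref{thm:minimalbound}), without writing out any further details. You have simply spelled out the two-line argument, including the minor bookkeeping about $k_n\ge 2$ and working with $\liminf$, all of which is sound.
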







\subsection{Cutting blocking sets in the projective plane}

In the projective plane, we can get better bounds on the cardinality of cutting blocking sets. This is due to the following result, which shows that we can reduce to study the cardinality of $2$-fold blocking sets.

\begin{lemma}\label{lem:eqcutting2fold}
In $PG(2,q)$ a set $\M$ is a cutting blocking set if and only if it is a $2$-fold blocking set.
\end{lemma}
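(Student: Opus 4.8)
The statement is an equivalence in $PG(2,q)$: a set $\M$ is a cutting blocking set (i.e. a cutting $1$-blocking set, where the relevant flats are the lines) if and only if $\M$ is a $2$-fold blocking set (every line meets $\M$ in at least $2$ points). I will prove the two implications separately, using the characterization of cutting blocking sets provided by Proposition \ref{prop:charactcutting}: in $PG(2,q)$, $\M$ is cutting if and only if for every line $\ell$ we have $\langle \M \cap \ell\rangle = \ell$.

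For the forward direction ($\Rightarrow$), suppose $\M$ is cutting. Let $\ell$ be any line. By Proposition \ref{prop:charactcutting}, $\langle \M\cap \ell\rangle = \ell$, and since a line is $1$-dimensional as a projective space (spanned by at least two distinct points), $\M\cap\ell$ must contain at least two distinct points. Hence $|\M\cap\ell|\geq 2$ for every line $\ell$, i.e. $\M$ is a $2$-fold blocking set. (In fact this direction is just the last assertion of Proposition \ref{prop:charactcutting} specialized to $N=2$, $r=1$, where $N-r+1 = 2$.)

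For the reverse direction ($\Leftarrow$), suppose $\M$ is a $2$-fold blocking set, so $|\M\cap\ell|\geq 2$ for every line $\ell$. I want to show $\M$ is cutting, again via Proposition \ref{prop:charactcutting}: I must check $\langle \M\cap\ell\rangle = \ell$ for every line $\ell$. But $\M\cap\ell$ contains at least two distinct points of the line $\ell$, and two distinct points of a line span that entire line in $PG(2,q)$; hence $\langle \M\cap\ell\rangle = \ell$. This completes the equivalence.

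I do not expect any real obstacle here — the proof is essentially a one-line observation once Proposition \ref{prop:charactcutting} is in hand, and the only thing to be careful about is the dimension bookkeeping: in $PG(2,q)$ the complement of a line in the relevant sense is still a line ($N-r = 2-1 = 1$), so "cutting $1$-blocking set" means exactly that every line is spanned by its intersection with $\M$, and a $1$-dimensional projective subspace is spanned precisely by any two of its distinct points. One could equally give a direct argument without invoking Proposition \ref{prop:charactcutting}, using the original definition of cutting: if $\M\cap\ell \subseteq \M\cap\ell'$ for two lines, and $\M\cap\ell$ has two distinct points, those two points lie on both $\ell$ and $\ell'$, forcing $\ell=\ell'$; conversely if some line $\ell$ met $\M$ in at most one point $P$ (or not at all), then every line through $P$ (of which there are $q+1 \geq 2$) would contain $\M\cap\ell$, violating cutting. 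I would likely present the short version through Proposition \ref{prop:charactcutting} and perhaps remark on the direct argument.
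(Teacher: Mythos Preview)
Your proposal is correct and follows essentially the same argument as the paper: both directions are handled via Proposition~\ref{prop:charactcutting}, using that a line in $PG(2,q)$ is spanned by any two of its distinct points. The paper's proof is just the condensed version of what you wrote, without the extra commentary.
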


\begin{proof}
Clearly, a cutting blocking set is a $2$-fold blocking set, as shown in Proposition \ref{prop:charactcutting}. On the other hand, if $\M$ is a $2$-fold blocking set, then for every line $\ell$ in $PG(2,q)$ $\langle \ell \cap \M \rangle =\ell$, since $|\ell \cap \M | \geq 2$. We conclude again by Proposition \ref{prop:charactcutting}.
\end{proof}

Using this equivalence, we can give upper bounds on minimal cutting blocking sets and lower bounds on cutting blocking sets. Thanks to the correspondence of Theorem \ref{thm:correspondence} between minimal codes and cutting blocking sets, we can regard these bounds as bounds on the length of minimal codes of dimension $3$. 

In particular, the following theorems follow directly  from Lemma \ref{lem:eqcutting2fold} and results in \cite{ball1996size, bishnoi2018minimal}.

\begin{theorem}[\cite{ball1996size}] \label{thm:mink3} Let $\C$ be an $[n,3]_q$ minimal code.
\begin{enumerate}
\item If $q<9$, then $n \geq 3q$.
\item If $q \in \{11,13,17,19\}$, then $n \geq (5q+7)/2$.
\item If $q>19$ and $q=p^{2d+1}$ for some $p$ prime and $d \in \mathbb N$, then $n \geq p^d\left\lceil \frac{p^{d+1}+1}{p^d+1} \right\rceil +2$.
\item If $q>4$ and $q$ is a square, then $n\geq 2q+2\sqrt{q}+2$.
\end{enumerate}
\end{theorem}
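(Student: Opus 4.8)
The plan is to prove all four bounds uniformly by transferring the question from minimal codes to double blocking sets in the projective plane, and then invoking the known extremal values for the latter. The two structural results already at our disposal do almost all of the work: Theorem \ref{thm:correspondence} converts a statement about code length into one about the size of a cutting blocking set, and Lemma \ref{lem:eqcutting2fold} identifies cutting blocking sets in $PG(2,q)$ with $2$-fold blocking sets. Thus the only genuinely external input will be the literature lower bounds on the minimum cardinality of a $2$-fold blocking set in $PG(2,q)$.

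First I would set up the reduction. Given an $[n,3]_q$ minimal code $\C$, apply $(\Phi,\Psi)$ to obtain an associated projective $[n,3,d]_q$ system $(\M,m)$ in $PG(k-1,q)=PG(2,q)$. By Theorem \ref{thm:correspondence} the underlying set $\M$ is a cutting blocking set, and by Lemma \ref{lem:eqcutting2fold} it is therefore a $2$-fold blocking set of the plane. Since $n=\sum_{P\in\M}m(P)\geq |\M|$, any lower bound on the size of a $2$-fold blocking set in $PG(2,q)$ becomes immediately a lower bound on $n$. (The extremal configurations are attained with $m\equiv 1$, i.e. by reduced minimal codes, but for the inequalities we only need $n\geq|\M|$.)

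Then it remains to quote, range by range, the smallest possible size of a double blocking set in $PG(2,q)$. For small $q$ the minimum is $3q$, realized by the union of three non-concurrent lines; for $q>4$ a square it is $2q+2\sqrt q+2$, realized by two disjoint Baer subplanes; and the two intermediate statements are exactly the values obtained in \cite{ball1996size}, with the non-square prime-power ranges also discussed in \cite{bishnoi2018minimal}. The work here is not a new argument but correctly matching each range condition on $q$ to the corresponding theorem, and verifying that the ceiling expression $p^{d}\lceil (p^{d+1}+1)/(p^{d}+1)\rceil+2$ is indeed the bound stated there. The main, and essentially only, obstacle is bookkeeping: ensuring the four cases are mutually consistent on their overlaps and that the cited bounds are quoted with the precise hypotheses (square versus odd power, and the cutoffs $q<9$, $q\in\{11,13,17,19\}$, $q>19$) under which they hold.
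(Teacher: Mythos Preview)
Your proposal is correct and matches the paper's approach exactly: the paper gives no self-contained proof but simply states that the theorem follows directly from Lemma~\ref{lem:eqcutting2fold} together with the known lower bounds on $2$-fold blocking sets in $PG(2,q)$ from \cite{ball1996size}. Your reduction via Theorem~\ref{thm:correspondence} and Lemma~\ref{lem:eqcutting2fold}, followed by quoting those bounds case by case (with the observation $n\geq|\M|$ to handle multiplicities), is precisely that argument spelled out.
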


\begin{theorem}[\cite{bishnoi2018minimal}]\label{thm:minredk3}
Let $\C$ be an $[n,3]_q$ reduced minimal  code. Then
$$n \leq \frac{q}{2} \left(\sqrt{ 8q-7}+ 1 \right)+2.$$
\end{theorem}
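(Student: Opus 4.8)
The plan is to reduce the upper bound on reduced minimal codes of dimension $3$ to an upper bound on \emph{minimal} $2$-fold blocking sets in $PG(2,q)$, and then to invoke the known bound from \cite{bishnoi2018minimal}. By Theorem \ref{thm:correspondence}, an $[n,3]_q$ reduced minimal code corresponds to a minimal cutting blocking set $\M$ in $PG(2,q)$ with $|\M|=n$ and no multiplicities. By Lemma \ref{lem:eqcutting2fold}, a cutting blocking set in $PG(2,q)$ is exactly a $2$-fold blocking set; so the length $n$ of a reduced minimal code is the size of a set $\M$ which is a $2$-fold blocking set such that no proper subset is a $2$-fold blocking set, i.e. a \emph{minimal} $2$-fold blocking set in the inclusion sense. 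Hence the quantity we must bound from above is the maximum size of a minimal $2$-fold blocking set in $PG(2,q)$.

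The key step is therefore to quote the relevant theorem of Bishnoi, De Boeck and others (the reference \cite{bishnoi2018minimal}), which states that a minimal $2$-fold blocking set in $PG(2,q)$ has size at most $\frac{q}{2}\bigl(\sqrt{8q-7}+1\bigr)+2$. One has to be slightly careful about the precise notion of minimality used there: it must coincide with ``no proper subset is a $2$-fold blocking set'', which is exactly the notion translated by Theorem \ref{thm:correspondence} via the puncturing operation. Assuming the statement of \cite{bishnoi2018minimal} is phrased in this way (or is easily seen to be equivalent), the corollary is immediate: if $\C$ is an $[n,3]_q$ reduced minimal code, its associated set $\M$ is a minimal $2$-fold blocking set of size $n$, so
$$n = |\M| \leq \frac{q}{2}\left(\sqrt{8q-7}+1\right)+2.$$

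The main obstacle is not mathematical depth but bookkeeping: one must confirm that the chain of equivalences (reduced minimal code $\leftrightarrow$ minimal cutting blocking set with $m\equiv 1$ $\leftrightarrow$ inclusion-minimal $2$-fold blocking set) genuinely matches the hypothesis under which the external bound of \cite{bishnoi2018minimal} is proved, and in particular that ``minimal'' there means inclusion-minimal among $2$-fold blocking sets rather than some stronger or weaker combinatorial condition. Once that identification is in place, there is nothing further to do; the proof is a one-line deduction combining Theorem \ref{thm:correspondence}, Lemma \ref{lem:eqcutting2fold}, and the cited extremal result. I would write the proof simply as: ``By Theorem \ref{thm:correspondence} and Lemma \ref{lem:eqcutting2fold}, the set associated with $\C$ is a minimal $2$-fold blocking set in $PG(2,q)$ of size $n$; the bound then follows from \cite{bishnoi2018minimal}.''
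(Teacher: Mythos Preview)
Your proposal is correct and matches the paper's approach exactly: the paper does not give a standalone proof but states that the result follows directly from Lemma~\ref{lem:eqcutting2fold} and the bound in \cite{bishnoi2018minimal}, which is precisely the chain of reductions you spell out (with Theorem~\ref{thm:correspondence} supplying the code-to-blocking-set translation). Your extra care in verifying that ``minimal'' in \cite{bishnoi2018minimal} means inclusion-minimal among $2$-fold blocking sets is appropriate bookkeeping, but once that is confirmed the argument is indeed the one-line deduction you describe.
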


\begin{remark}
Observe that the minimal codes of Theorem \ref{thm:mink3} correspond to cutting blocking sets in $PG(2,q)$ and the reduced minimal codes of Theorem \ref{thm:minredk3} correspond to minimal cutting blocking sets in $PG(2,q)$, via the correspondence $(\Phi,\Psi)$ of Theorem \ref{thm:correspondence}.
\end{remark}

It would be interesting to have similar results for projective spaces of larger dimension, but in this case the equivalence of Lemma \ref{lem:eqcutting2fold} does not hold.

\color{black}

\section{Construction of minimal codes}\label{sec:constructions}

In this section we provide a general construction of reduced minimal codes based on the geometric point of view. For this family of codes, we also determine the weight distribution, using basic combinatorial results in finite geometry.

We start with two  auxiliary lemmas, based on avoiding results in finite projective spaces.

\begin{lemma}\label{lem:avoidinghyper}
 Let $q$ be a prime power, $k, r$ be integers such that $1\leq r \leq k$. Let $P_1,\ldots, P_r \in PG(k-1,q)$ be points not on the same $(r-2)$-flat. Then, the number of hyperplanes $H$ avoiding $P_1,\ldots,P_r$ is 
  $q^{k-r}(q-1)^{r-1}.$
\end{lemma}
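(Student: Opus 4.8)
The plan is to translate the whole statement into linear algebra over $\Fq$ and then perform an elementary count. Recall that $PG(k-1,q)$ is the projectivization of $\Fq^k$: each point $P_i$ corresponds to a nonzero vector $v_i \in \Fq^k$, well defined up to a scalar, and each hyperplane $H$ corresponds to a nonzero linear functional $\phi \in (\Fq^k)^\ast$, again well defined up to a scalar, with $P \in H$ if and only if $\phi$ vanishes on the line $\langle P\rangle$. The first step is to observe that the hypothesis ``$P_1,\dots,P_r$ do not lie on a common $(r-2)$-flat'' is equivalent to saying that the vectors $v_1,\dots,v_r$ are linearly independent in $\Fq^k$: indeed, if they were dependent they would span a subspace of dimension at most $r-1$, i.e.\ they would lie on a flat of dimension at most $r-2$, and conversely.

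Next I would rephrase the quantity to be computed. A hyperplane $H = \langle \phi \rangle^\perp$ avoids $P_i$ precisely when $\phi(v_i) \neq 0$. Since scalar multiples of a fixed nonzero $\phi$ give the same hyperplane and clearly satisfy the same non-vanishing conditions, the number of hyperplanes avoiding $P_1,\dots,P_r$ equals
$$\frac{1}{q-1}\,\bigl|\{\phi \in (\Fq^k)^\ast \setminus\{0\} : \phi(v_i)\neq 0 \text{ for } i=1,\dots,r\}\bigr|.$$
The heart of the argument is then to count the functionals in this set. Extend $v_1,\dots,v_r$ to a basis $v_1,\dots,v_k$ of $\Fq^k$ (possible by linear independence). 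A functional $\phi$ is uniquely determined by the tuple $(\phi(v_1),\dots,\phi(v_k)) \in \Fq^k$, and it satisfies the required conditions exactly when $\phi(v_1),\dots,\phi(v_r)$ are all nonzero while $\phi(v_{r+1}),\dots,\phi(v_k)$ are arbitrary. This gives $(q-1)^r\cdot q^{k-r}$ choices; moreover every such $\phi$ is automatically nonzero because $r\geq 1$ forces $\phi(v_1)\neq 0$, so we do not over- or under-count. Dividing by $q-1$ yields $q^{k-r}(q-1)^{r-1}$, as claimed.

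I do not expect a genuine obstacle here; the proof is a direct counting argument once the dictionary between the projective and the linear pictures is set up. The only points requiring a line of care are the equivalence between the geometric general-position hypothesis and linear independence of the $v_i$, and the remark that the condition $r\geq 1$ is exactly what makes ``$\phi\neq 0$'' automatic (if one allowed $r=0$ the formula would have to be corrected by subtracting the zero functional). As sanity checks one can note that for $r=1$ the formula gives $q^{k-1} = \theta_q(k-1)-\theta_q(k-2)$, the number of hyperplanes not through a given point, and for $r=k$ it gives $(q-1)^{k-1}$, matching a direct count in $PG(1,q)$ when $k=2$.
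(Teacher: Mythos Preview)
Your proof is correct and complete. It differs from the paper's approach: the paper proves the lemma by inclusion--exclusion, writing the number of hyperplanes avoiding all $P_j$ as $\sum_{i=0}^r (-1)^i\binom{r}{i}\theta_q(k-1-i)$ and simplifying this sum algebraically. Your argument instead passes to the dual vector space, extends $v_1,\dots,v_r$ to a basis, and directly counts functionals coordinate by coordinate. Your route is shorter and avoids any nontrivial identity; the paper's route has the minor advantage of staying purely inside projective language and of making the role of inclusion--exclusion explicit, which can be reused for related counts. Both rely in the same way on the observation that ``not on a common $(r-2)$-flat'' is equivalent to linear independence of representative vectors.
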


\begin{proof}
It follows from a simple calculation using inclusion-exclusion principle. Since the number of hyperplanes is $\theta_q(k-1)$, and the number of hyperplanes containing at least $i$ points among the $P_j$'s is equal to $\theta_q(k-1-i)$, we get that the number of hyperplanes avoiding all the $P_j$'s is  
 \begin{align*}
     & \; \theta_q(k-1)-\sum_{i=1}^{r}(-1)^{i-1}\binom{r}{i}\theta_q(k-1-i) \\
     = & \; \frac{1}{q-1}\sum_{i=0}^r\binom{r}{i}(-1)^i(q^{k-i}-1) \\
     = & \;  \frac{1}{q-1}\left(q^{k-r} \sum_{i=0}^r \binom{r}{i}(-1)^iq^{r-i} -\sum_{i=0}^r \binom{r}{i}(-1)^i \right)\\
     = & \; q^{k-r}(q-1)^{r-1}. 
 \end{align*} 
\end{proof}

\begin{lemma}\label{lem:countnumbers}
 Let $P_1,\ldots, P_k \in PG(k-1,q)$ be points in general position. Then, the number of hyperplanes  containing $P_1,\ldots, P_s$ and avoiding $P_{s+1},\ldots, P_k$ is $(q-1)^{k-s-1}$ 
\end{lemma}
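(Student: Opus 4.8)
The plan is to reduce the count to Lemma~\ref{lem:avoidinghyper} by passing to the quotient projective space obtained by collapsing the flat spanned by $P_1,\dots,P_s$. Throughout we may assume $1\le s\le k-1$: for $s=k-1$ the quotient below degenerates to a single point and the unique ``hyperplane'' is the empty flat, which avoids the remaining point, giving the count $1=(q-1)^0$; the case $s=k$ is vacuous, since $k$ points in general position span $PG(k-1,q)$ and hence lie on no hyperplane.

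First I would set $\Pi:=\langle P_1,\dots,P_s\rangle$. Since the $P_i$ are in general position, $P_1,\dots,P_s$ are projectively independent, so $\Pi$ is an $(s-1)$-flat, and a hyperplane $H$ of $PG(k-1,q)$ contains $P_1,\dots,P_s$ if and only if $\Pi\subseteq H$. The hyperplanes of $PG(k-1,q)$ through $\Pi$ are in bijection with the hyperplanes of the quotient geometry $PG(k-1,q)/\Pi\cong PG(k-s-1,q)$ via $H\mapsto H/\Pi$; moreover each $P_i$ with $i>s$ lies off $\Pi$ (again by general position, since otherwise $P_1,\dots,P_s,P_i$ would be $s+1$ points on an $(s-1)$-flat), so its image $\bar P_i$ is a genuine point of the quotient and $P_i\in H$ if and only if $\bar P_i\in H/\Pi$. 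Next I would verify that $\bar P_{s+1},\dots,\bar P_k$ are in general position in $PG(k-s-1,q)$: lifting to vectors $p_i\in\Fq^{\,k}$, general position of $P_1,\dots,P_k$ means $p_1,\dots,p_k$ is a basis, and if $\sum_{i>s}\lambda_i p_i\in\langle p_1,\dots,p_s\rangle$ then linear independence of $p_1,\dots,p_k$ forces $\lambda_i=0$ for all $i>s$; hence the images of $p_{s+1},\dots,p_k$ form a basis of $\Fq^{\,k}/\langle p_1,\dots,p_s\rangle\cong\Fq^{\,k-s}$, which is exactly general position of the $k-s$ points $\bar P_i$ in $PG(k-s-1,q)$.

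Putting this together, the hyperplanes through $P_1,\dots,P_s$ avoiding $P_{s+1},\dots,P_k$ correspond bijectively to the hyperplanes of $PG(k-s-1,q)$ avoiding the $k-s$ points $\bar P_{s+1},\dots,\bar P_k$, which are in general position and therefore not all contained in a common $(k-s-2)$-flat. Applying Lemma~\ref{lem:avoidinghyper} in $PG(k-s-1,q)$ with $r=k-s$ then yields $q^{(k-s)-(k-s)}(q-1)^{(k-s)-1}=(q-1)^{k-s-1}$, as claimed. Alternatively, one can avoid the quotient and argue directly by inclusion--exclusion: by general position the number of hyperplanes containing $P_1,\dots,P_s$ together with at least $j$ prescribed further points among $P_{s+1},\dots,P_k$ is $\theta_q(k-1-s-j)$, so the desired count equals $\sum_{j=0}^{k-s}(-1)^j\binom{k-s}{j}\theta_q(k-1-s-j)$, which collapses to $(q-1)^{k-s-1}$ by exactly the computation in the proof of Lemma~\ref{lem:avoidinghyper}. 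I do not expect a genuine obstacle; the only care needed is in extracting from ``general position'' the two facts used above—that the $P_i$ with $i>s$ avoid $\Pi$ and that passing to the quotient preserves general position—and in noting the degenerate boundary value $s=k-1$.
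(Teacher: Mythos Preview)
Your argument is correct and follows exactly the paper's approach: quotient by $\Pi=\langle P_1,\dots,P_s\rangle$ and then invoke Lemma~\ref{lem:avoidinghyper} in the quotient space to obtain $(q-1)^{k-s-1}$. Your write-up is simply more careful about the boundary cases and about why general position survives the quotient; the alternative direct inclusion--exclusion you mention at the end is also fine but not a different idea, since it amounts to redoing the computation of Lemma~\ref{lem:avoidinghyper}.
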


\begin{proof}
 Let $\Lambda:=\langle P_{1}, \ldots, P_s\rangle$, then the number of hyperplanes containing $\Lambda$ and avoiding $P_{s+1},\ldots, P_{k}$ is in correspondence with the number of hyperplanes in $ PG(k-1)/\Lambda \cong PG(k-s,q)$ avoiding $P_{s+1},\ldots, P_k$. Such number is, by Lemma \ref{lem:avoidinghyper}, equal to $(q-1)^{k-s-1}$.
\end{proof}

\begin{theorem}\label{thm:tetraedro}
 Let $P_1,\ldots, P_k$ be points in general position in $PG(k-1,q)$. For $0 \leq i <j \leq k$, consider the line $\ell_{i,j}:=\langle P_i, P_j \rangle.$ Then, $\M:=\bigcup_{i,j} \ell_{i,j}$ is a minimal cutting blocking set.
\end{theorem}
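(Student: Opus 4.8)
The plan is to verify the characterization of cutting blocking sets from Proposition \ref{prop:charactcutting}: I must show that for every hyperplane $H$ of $PG(k-1,q)$, the span $\langle \M \cap H \rangle$ equals $H$. Equivalently, I will show that $\M \cap H$ is not contained in any $(k-3)$-flat. Since $P_1, \dots, P_k$ are in general position, every hyperplane misses at least one of them; after reordering, say $H$ avoids $P_k$ and contains $P_1, \dots, P_s$ for some $0 \le s \le k-1$ (it may contain several of the $P_i$'s, but not all).

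The key observation is that each line $\ell_{i,j}$ meets $H$: if $P_i \in H$ or $P_j \in H$ the intersection contains that point, and if neither is in $H$ then $\ell_{i,j} \cap H$ is a single point $Q_{i,j}$ distinct from both $P_i$ and $P_j$. So for $H$ avoiding $P_k$ and containing exactly $P_1, \dots, P_s$ (with $s \le k-1$), the set $\M \cap H$ contains all the points $P_1, \dots, P_s$ together with, for each pair $(i,j)$ with $i,j \in \{s+1, \dots, k\}$, the point $Q_{i,j} = \ell_{i,j} \cap H$. I would then argue that these points already span $H$. The cleanest way: the points $P_{s+1}, \dots, P_k$ together with $P_1, \dots, P_s$ span all of $PG(k-1,q)$; projecting from $\Lambda := \langle P_1, \dots, P_s\rangle$, the images of $P_{s+1}, \dots, P_k$ are $k-s$ points in general position in $PG(k-s-1, q) \cong H/\Lambda$ (since $\dim H = k-2$ and $\dim \Lambda = s-1$, the quotient has projective dimension $k-s-1$, and general position is preserved under projection from a subspace spanned by some of the points). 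For a simplex frame of $k-s$ points in general position in $PG(k-s-1,q)$, the points $\overline{Q_{i,j}}$ (images of the $Q_{i,j}$, which are the points where $\ell_{i,j}$, i.e. the line through $\overline{P_i}$ and $\overline{P_j}$, meets the hyperplane at infinity — but here $H/\Lambda$ is the whole space, so I need to be a little more careful).

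Let me restate the span argument more directly in the full space. I claim $\langle P_1, \dots, P_s, \{Q_{i,j}\}_{s+1 \le i < j \le k} \rangle = H$. Work in coordinates where $P_i = e_i$ for $i = 1, \dots, k$. A hyperplane $H$ avoiding $P_k = e_k$ and containing $P_1, \dots, P_s$ has equation $\sum_{\ell} a_\ell x_\ell = 0$ with $a_1 = \dots = a_s = 0$, $a_k \ne 0$, and (since $H$ avoids $P_i$ for $i > s$, wait — $H$ may contain some $P_i$ with $i > s$; let me just say $H$ contains exactly $P_1, \dots, P_s$ so $a_\ell \ne 0$ for $\ell > s$). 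Wait, that is not forced either: $a_\ell = 0$ iff $P_\ell \in H$. So with $H$ containing exactly $P_1,\dots,P_s$ we get $a_\ell \ne 0$ precisely for $\ell \in \{s+1,\dots,k\}$. Then $Q_{i,j} = \ell_{i,j} \cap H$ for $s+1 \le i < j \le k$ is the point $a_j e_i - a_i e_j$ (up to scalar). The span of $\{e_1,\dots,e_s\} \cup \{a_j e_i - a_i e_j : s+1 \le i<j\le k\}$: within the coordinates $s+1, \dots, k$, the vectors $a_j e_i - a_i e_j$ span the hyperplane $\sum_{\ell>s}(x_\ell/a_\ell) = 0$... actually they span exactly the $(k-s-1)$-dimensional space $\{\sum_{\ell > s} c_\ell e_\ell : \sum c_\ell / a_\ell = 0\}$ — hmm, that has dimension $k - s - 1$. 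Together with $e_1, \dots, e_s$ that gives a space of projective dimension $(s-1) + (k-s-1) + 1 = k - 2 = \dim H$. So the span is all of $H$.

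I would organize the final write-up as: (1) reduce to showing $\langle \M \cap H\rangle = H$ for all hyperplanes $H$ via Proposition \ref{prop:charactcutting}; (2) fix $H$, choose coordinates with $P_i = e_i$, record which $P_i$ lie on $H$; (3) identify the points $Q_{i,j}$ explicitly and compute the span, concluding it equals $H$; (4) invoke minimality — for reduced-ness I would note that removing any single point $P$ of $\M$ should destroy the cutting property, and here I would pick a hyperplane $H$ for which $\M \cap H$ spans $H$ but only barely, i.e. has a point whose removal drops the span; a natural candidate is a hyperplane containing exactly two of the $P_i$'s and tangent situations, but the cleanest is to use Theorem \ref{thm:correspondence} in reverse and instead argue on the code side, or to exhibit for each $P \in \M$ an explicit hyperplane witnessing failure. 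The main obstacle I anticipate is precisely part (4), the \emph{minimality} of the cutting blocking set: verifying that no point is redundant requires, for each $P \in \M$, producing a hyperplane $H$ with $P \in \langle \M \cap H\rangle$ but $P \notin \langle (\M \setminus \{P\}) \cap H \rangle$, i.e. $H = \langle(\M\cap H)\setminus\{P\}\rangle \oplus$ something — one must check separately the case $P = P_i$ (a vertex) and the case $P = Q_{i,j}$ (an interior point of an edge $\ell_{i,j}$), and in the latter case carefully choose a hyperplane through $P$ meeting the configuration in a spanning-but-minimal way. The span computation in part (3) is routine linear algebra once the coordinates are set up, so I would not belabor it.
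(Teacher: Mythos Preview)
Your argument for the \emph{cutting} property is correct but more laborious than the paper's. You set up coordinates, classify hyperplanes by the set of vertices they contain, and compute spans explicitly. The paper instead picks a single vertex not on $H$, say $P_1$, observes that $H$ meets each edge $\ell_{1,j}$ in a point $Q_j$, and notes that $\langle \M\cap H,\,P_1\rangle$ then contains every line $\ell_{1,j}=\langle P_1,Q_j\rangle$, hence every $P_j$; thus $\langle \M\cap H,\,P_1\rangle=PG(k-1,q)$, which forces $\dim\langle \M\cap H\rangle=k-2$. No coordinates, no case split on $s$, and only the $k-1$ edges through one vertex are used. Your explicit computation does pay off elsewhere (it is essentially what is needed for the weight distribution in Theorem~\ref{thm:tetraedrocode}), but for the cutting property alone the paper's synthetic argument is shorter.

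The real gap is, as you anticipated, step~(4): you have not given a working plan for minimality, only a list of possible attacks. The paper's argument is short and direct, and you should be able to complete it along these lines. To remove a vertex, say $P_1$: choose a $(k-3)$-flat $\Lambda\subseteq\langle P_2,\ldots,P_k\rangle$ avoiding $P_2,\ldots,P_k$ (one exists by Lemma~\ref{lem:avoidinghyper}) and set $H:=\langle \Lambda,P_1\rangle$; then $H\cap\langle P_2,\ldots,P_k\rangle=\Lambda$, so $(\M\setminus\{P_1\})\cap H\subseteq\Lambda$ and the span drops. To remove an interior point $Q\in\ell_{1,2}\setminus\{P_1,P_2\}$: take $H:=\langle Q,P_3,\ldots,P_k\rangle$; then $P_1,P_2\notin H$, each edge $\ell_{1,j}$ or $\ell_{2,j}$ with $j\ge3$ meets $H$ only in $P_j$, the edge $\ell_{1,2}$ meets $H$ only in $Q$, and every remaining edge $\ell_{i,j}$ with $i,j\ge3$ lies in $\langle P_3,\ldots,P_k\rangle$; hence $(\M\setminus\{Q\})\cap H\subseteq\langle P_3,\ldots,P_k\rangle$, again a $(k-3)$-flat. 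Your suggestion to ``argue on the code side'' via Theorem~\ref{thm:correspondence} does not help: it translates into exactly the same obligation (for each column, exhibit a codeword witnessing that puncturing destroys minimality).
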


\begin{proof}
Let $H$ be a hyperplane in $PG(k-1,q)$. Since the points $P_1,\ldots, P_k$ are in general position, there exists at least one point among them, say $P_1$ that is not in $H$. Consider the intersection $H \cap \M$, which does not contain $P_1$. Hence $H$ meets the lines $\ell_{1,j}$'s in $k-1$ distinct points $Q_2,\ldots Q_k$, i.e. $\{Q_j\}=H \cap \ell_{1,j}$ for $j \in \{2,\ldots, k\}$. Take the flat $\Lambda:= \langle \M \cap H\rangle$, and observe that $$\langle \Lambda, P_1\rangle\supseteq \langle P_1,Q_j\rangle =\ell_{1,j},$$
However, $P_j \in \ell_{1,j}$ for every $j\in \{2,\ldots, k\}$, and this implies $\langle \Lambda, P_1\rangle \supseteq \langle P_1,\ldots, P_k \rangle =PG(k-1,q)$. Hence, necessarily $\dim(\Lambda)=k-2$ and by Proposition \ref{prop:charactcutting}, $\M$ is a cutting blocking set.

It is left to prove that $\M$ is minimal. Suppose we remove from $\M$ one of the points $P_i$'s from $\M$, say $P_1$, getting $\tilde{\M}:=\M \setminus \{P_1\}$. Take a $(k-3)$-flat  $\Lambda \subseteq \langle P_2,\ldots, P_k\rangle$ avoiding the points $P_2,\ldots, P_k$. By Lemma \ref{lem:avoidinghyper}  such an hyperplane always exists. Hence $H:=\langle \Lambda, P_1 \rangle$ is an hyperplane such that $H\cap \tilde{\M}\subseteq \Lambda$, and by Proposition \ref{prop:charactcutting}, $\tilde{M}$ is not minimal. Similarly, choose a point in $\M\setminus\{P_1,\ldots, P_k\}$ and remove it from $\M$. Without loss of generality, we can choose $Q_{1,2} \in \ell_{1,2}\setminus\{P_1,P_2\}$ and consider $\tilde{M}:=\M \setminus \{Q_{1,2}\}$.  Take the space $H:=\langle Q_{1,2}, P_3,\ldots, P_k\rangle$. It is easy to see that $\langle H,P_1\rangle =\langle H,P_2 \rangle =PG(k-1,q)$, and hence $H$ is an hyperplane. Moreover, $H\cap \M=\{Q_{1,2}, P_3,\ldots, P_k\}$, therefore $\dim (H \cap \tilde{\M})=\dim (\langle P_3,\ldots, P_k \rangle)=k-3$, and by Proposition \ref{prop:charactcutting} $\tilde{\M}$ can not be a cutting blocking set. 
\end{proof}

The next result analyzes the reduced minimal code obtained in Theorem \ref{thm:tetraedro}, giving the full description of its weight distribution.

\begin{theorem}\label{thm:tetraedrocode}
 The code associated to the minimal cutting blocking set of Theorem \ref{thm:tetraedro} is a $[\binom{k}{2}(q-1)+k,k]_q$ reduced minimal code $\C$, whose weights are exactly
 $$f_{q,k}(r):= \frac{1}{2}(k-r)((k+r-1)q-2k+4),$$
 for every $r \in \{0,\ldots, k-1\}$.
 Furthermore, the weight distribution of $\C$ is given by
 $$A_i(\C)=\sum_{\{r\mid f_{q,k}(r)=i\}}\binom{k}{r}(q-1)^{k-r}.$$
\end{theorem}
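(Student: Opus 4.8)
The plan is to work directly with the cutting blocking set $\M = \bigcup_{i<j} \ell_{i,j}$ from Theorem~\ref{thm:tetraedro} and translate the geometric data into the code parameters via the correspondence $(\Phi,\Psi)$ of Theorem~\ref{thm:correspondence}. First I would compute $n = |\M|$: each line $\ell_{i,j}$ has $q+1$ points, $\binom{k}{2}$ lines meet pairwise only in the $P_i$'s, and each $P_i$ lies on exactly $k-1$ of these lines. So a straightforward inclusion–exclusion (or direct count: $k$ "vertex" points plus $q-1$ "interior" points on each of the $\binom{k}{2}$ lines) gives $n = k + \binom{k}{2}(q-1)$, which matches the claimed length. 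Since $\M$ is a minimal cutting blocking set with no repeated points, Theorem~\ref{thm:correspondence} immediately tells us the associated code $\C$ is an $[n,k]_q$ reduced minimal code; the dimension is $k$ because the $P_i$'s already span $PG(k-1,q)$.

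Next I would compute the weights. A nonzero codeword $vG$ corresponds to a hyperplane $H = \langle v\rangle^\perp$, and $\wt(vG) = n - \ch_\M(H) = n - |\M \cap H|$. The key observation is that the intersection pattern of a hyperplane $H$ with $\M$ is governed entirely by which of the "vertices" $P_1,\ldots,P_k$ lie on $H$. Suppose $H$ contains exactly the vertices $P_{i}$ for $i$ in some set $S$ with $|S| = s$, and avoids the remaining $k-s$ of them; since $P_1,\dots,P_k$ are in general position, $H$ can contain at most $k-1$ of them, so $0 \le s \le k-1$, i.e. $r := k - s$ ranges in $\{1,\dots,k\}$. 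I would then count $|\M \cap H|$ by classifying the lines $\ell_{i,j}$: a line with both endpoints in $S$ is contained in $H$ (contributing $q+1$ points), a line with exactly one endpoint in $S$ meets $H$ in that endpoint together with exactly one further point of $\ell_{i,j}$ — but one must be careful about whether that extra point is an interior point or another vertex; since $H$ misses all vertices outside $S$, the extra intersection point on such a line is always an interior (non-vertex) point — and a line with no endpoint in $S$ meets $H$ in exactly one interior point (using that $\M$ is a cutting blocking set, so $\langle \ell_{i,j} \cap H\rangle$ — wait, more simply: $H$ meets every line in at least one point, and if it met $\ell_{i,j}$ in two or more it would contain the line hence both its vertices, contradiction). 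Carefully assembling these counts: vertices contribute $s$; lines inside $H$ contribute $\binom{s}{2}(q-1)$ interior points; lines with one endpoint in $S$ contribute $s(k-s)$ interior points; lines with no endpoint in $S$ contribute $\binom{k-s}{2}$ interior points. Summing and subtracting from $n$, then substituting $s = k-r$ and simplifying, I expect to recover exactly $f_{q,k}(r) = \tfrac12(k-r)((k+r-1)q - 2k + 4)$.

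For the weight distribution, I would count, for each $r \in \{0,\ldots,k-1\}$, the number of hyperplanes $H$ that contain exactly a prescribed $(k-r)$-subset of the vertices and avoid the other $r$: by Lemma~\ref{lem:countnumbers} this number is $(q-1)^{r-1}$, and there are $\binom{k}{k-r} = \binom{k}{r}$ such subsets, giving $\binom{k}{r}(q-1)^{r-1}$ hyperplanes. Each hyperplane $H = \langle v\rangle^\perp$ with $v$ of weight (in the code) equal to the corresponding $f_{q,k}(r)$ arises from $q-1$ scalar multiples of $v$, all of which are codewords of the same weight, so the number of codewords of each such type is $(q-1)\cdot\binom{k}{r}(q-1)^{r-1} = \binom{k}{r}(q-1)^{r}$. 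Summing over all $r$ with $f_{q,k}(r) = i$ yields $A_i(\C) = \sum_{\{r \mid f_{q,k}(r) = i\}}\binom{k}{r}(q-1)^{k-r}$. Wait — I need to reconcile the exponent: the formula in the statement has $(q-1)^{k-r}$, so I would reindex, writing the "avoided" set as having size $r$ and the "contained" set size $k-r$; then Lemma~\ref{lem:countnumbers} with $s = k-r$ gives $(q-1)^{k-(k-r)-1} = (q-1)^{r-1}$ hyperplanes per subset, $\binom{k}{k-r} = \binom{k}{r}$ subsets, times $q-1$ for scalars, giving $\binom{k}{r}(q-1)^{r}$... so I must double-check which index convention makes $f_{q,k}(r)$ come out as stated and align the exponent accordingly; this bookkeeping of the two dual indices ($s$ = number of vertices on $H$ versus $r$ = number avoided) is the one place where a sign or off-by-one slip is easy.

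The main obstacle, I expect, is not any single hard idea but getting the hyperplane–line intersection census exactly right: in particular, verifying rigorously that a hyperplane meeting the configuration in the stated pattern picks up precisely one interior point on each line not contained in it (no accidental coincidences among the interior points of different lines, which holds because distinct lines $\ell_{i,j}, \ell_{i',j'}$ through disjoint vertex pairs can share at most a point, and that point would have to be a vertex if it lay on both — needs the general position hypothesis), and then performing the algebraic simplification of $n - \big(s + \binom{s}{2}(q-1) + s(k-s) + \binom{k-s}{2}\big)$ down to the closed form $f_{q,k}(r)$. I would also double-check the edge cases $r = 0$ (the zero codeword, weight $0$, corresponding to "$H$ contains all $k$ vertices" which is impossible — consistent, since $r=0$ gives $f_{q,k}(0)$ should be... actually $f_{q,k}(0) = \tfrac12 k((k-1)q - 2k+4)$, which is a genuine nonzero weight, so the indexing must be such that $r$ ranges over $\{0,\dots,k-1\}$ with $r$ = number of vertices \emph{not} on $H$ being impossible to be $0$; so the $r=0$ term counts... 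I will need to resolve this carefully against Lemma~\ref{lem:countnumbers}) and $r = k-1$ (the minimum-weight words).
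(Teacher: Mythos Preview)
Your overall strategy is exactly the paper's: compute $|\M\cap H|$ in terms of how many of the $P_i$ lie on $H$, then invoke Lemma~\ref{lem:countnumbers} for the distribution. But there is one genuine error and one unresolved bookkeeping issue.

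\textbf{The error.} You write that a line $\ell_{i,j}$ with exactly one endpoint in $H$ ``meets $H$ in that endpoint together with exactly one further point,'' and you carry the resulting $s(k-s)$ interior points into the final sum. This is false: a line not contained in a hyperplane meets it in \emph{exactly one} point, and if $P_i\in H$ then that point is $P_i$ itself, already accounted for among the $s$ vertices. (You even use this fact correctly two lines later for the ``no endpoint in $S$'' case.) The correct census is
\[
|\M\cap H| \;=\; s \;+\; \binom{s}{2}(q-1)\;+\;\binom{k-s}{2},
\]
with no $s(k-s)$ term. With your extra term the algebra will not collapse to $f_{q,k}$; without it, it does.

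\textbf{The indexing.} In the paper's statement, $r$ is the number of vertices \emph{contained in} $H$ (your $s$), not the number avoided. With this convention $r\in\{0,\ldots,k-1\}$, and Lemma~\ref{lem:countnumbers} with $s=r$ gives $(q-1)^{k-r-1}$ hyperplanes through a fixed $r$-subset of vertices and avoiding the rest; multiplying by $\binom{k}{r}$ subsets and by $q-1$ nonzero scalars yields exactly $\binom{k}{r}(q-1)^{k-r}$, matching the stated $A_i(\C)$. Your repeated attempts to swap $r\leftrightarrow k-r$ are the source of the exponent mismatch and the confusion about the $r=0$ case (which here simply means $H$ contains none of the $P_i$, a perfectly legitimate hyperplane).

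Your side remark that the points $Q_{i,j}=H\cap\ell_{i,j}$ for $r<i<j\le k$ are pairwise distinct is correct and worth keeping: two such lines through disjoint vertex pairs are skew because their four vertices are in general position, and $Q_{i,j}$ cannot equal any $P_m$ with $m\le r$ since that would put three of the $P$'s on a line.
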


\begin{proof}
 By the equivalence $(\Phi,\Psi)$ between codes and projective systems, the dimension of the code obtained by $\M$ is clearly $k$ and its length is $n=\binom{k}{2}(q-1)+k$. 
  Now, for an hyperplane $H=\langle v \rangle^{\perp}$, the weight
 of its $q-1$ associated codewords (i.e. all the nonzero multiples of $vG$, where $G$ is the generator matrix obtained from $\M$) is $n-|\M \cap H|$. Therefore, it is determined by  $|H \cap \M|$. By the symmetric properties of $\M$, the quantity $|H \cap \M|$ only depends on the integer 
 $$r:=|\{i \in \{1,\ldots, k\} \mid P_i \in H\}|.$$ 
 In this case, without loss of generality we can assume that $P_1,\ldots, P_r \in H$, and $P_{r+1},\ldots,$ $P_{k-r} \notin H$. Hence, $\M$ contains all the lines $\ell_{i,j}$ for $0 \leq i <j \leq r$, and it intersects all the lines $\ell_{i,j}$ in $\{P_i\}$, for $0 \leq i \leq r<j \leq k$ , and in $\{Q_{i,j}\}$ for $r+1 \leq i <j  \leq k$. Moreover, observe that the points $Q_{i,j}$ are all pairwise distinct. Therefore, the weight of the codeword associated to $H$ is equal to 
 \begin{align*}
    f_{q,k}(r) = & \;\binom{k}{2}(q-1)+k-|\M \cap H| \\
     = & \; \binom{k}{2}(q-1)+k-\Big|\bigcup_{0 \leq i \leq r<j \leq k}\ell_{i,j} \Big| - \Big|\bigcup_{r+1 \leq i <j  \leq k}\{Q_{i,j}\} \Big| \\
     = & \; \binom{k}{2}(q-1)+k-\binom{r}{2}(q-1)-r+\binom{k-r}{2} \\
    = & \; \frac{1}{2}(k-r)((k+r-1)q-2k+4).
 \end{align*}
 The numbers $A_{i}(\C)$ follow from Lemma \ref{lem:countnumbers}, taking into account that for every hyperplane we need to count  $q-1$ distinct codewords, which correspond to all the nonzero multiples.
\end{proof}

\begin{example}
 We explain now in details the situation for $k=3$. The construction of the minimal cutting blocking set of Theorem \ref{thm:tetraedrocode} corresponds to the union of three lines $\ell_1,\ell_2,\ell_3$ in the projective plane $PG(2,q)$ with trivial intersection, that is $\ell_1 \cap \ell_2 \cap \ell_3 = \emptyset$. We write   $\{P_{i,j}\} =\ell_i \cap \ell_j$ for  $1\leq i <j \leq 3$. Here hyperplanes are lines and for any line $\ell$ there are three possibilities: it can coincide with one of the lines $\ell_i$'s, it can contain one of the $P_{i,j}$'s, or none of them. The three cases give weights 
 $f_{q,3}(2)=2q-1$, $f_{q,3}(1)=3q-2$  and $f_{q,3}(0)=3q-3$. This code for $q \geq 3$ is a three-weight code with weight distribution $A_0=1$, $A_{2q-1}=3(q-1)$,  $A_{3q-3}=(q-1)^3$ and  $A_{3q-2}= 3(q-1)^2$,  and for $q=2$ it is a two-weight code with weight distribution $A_0=1$, $A_{3}=4$ and  $A_{5}=3$.
\end{example}

\begin{remark}
The family of codes described in Theorem \ref{thm:tetraedro} has  been constructed independently also in \cite{bartoli2019inductive}. However, in that paper the authors provided only the construction for $q \geq k+2$ and they did not study the reducedness, nor find the weight distributions. 
This suggests that the geometric point of view allows to analyze better the properties of minimal codes.
\end{remark}

\begin{remark}
 The construction of Theorem \ref{thm:tetraedrocode} for dimension $k=3$ gives rise to minimal codes of shortest possible length, whenever $q<9$. This follows from Theorem \ref{thm:mink3}.
 It is not clear, however, if this is true also when $q\geq 9$.
\end{remark}

\subsection{Minimal codes of dimension $4$}\label{subse:dim4}
Here we exhibit a special construction for minimal codes of dimension $4$, using cutting blocking sets in $PG(3,q)$ which have size smaller than the ones provided in Theorem \ref{thm:tetraedro}.

\begin{construction}\label{constrdim4}
 Let $ P_1,P_2,P_3,P_4\in PG(3,q)$ be points in general position. Up to change of coordinates, we can assume them to be the (representatives of the) standard basis vectors. Consider the lines $\ell_i=\langle P_i, P_{i+1}\rangle$ for $i \in \{1,2,3,4\}$ and the indices taken modulo $4$. For the line $m_1:=\langle P_1,P_3\rangle$, consider the sheaf of planes  $\{H_{\alpha} \mid \alpha  \in \Fq^*\}$ containing it, given by $H_{\alpha}:=\{[x,y,z,\alpha y]\mid [x,y,z]\in PG(2,q)\}$, where we have removed the planes $\langle \ell_1, \ell_2 \rangle$ and $\langle \ell_3,\ell_4 \rangle$.
 For the line $m_2:= \langle P_2, P_4 \rangle$, we do the same, and take  the sheaf of planes $\{K_{\alpha} \mid \alpha  \in \Fq^*\}$ containing it, given by $K_{\alpha}:=\{[x,y,\alpha x,z]\mid [x,y,z]\in PG(2,q)\}$, where we have removed the planes $\langle \ell_1, \ell_4 \rangle$ and $\langle \ell_2,\ell_3 \rangle$.
 Now, for every $\alpha \in \Fq^\ast$ compute $H_\alpha \cap K_\alpha=\{[x,y,\alpha x, \alpha y] \mid [x,y] \in PG(1,q)\}$.  We fix a $\beta \in \Fq^\ast$, and take the point 
 $$Q_{\beta,\alpha}:=[1,\beta,\alpha,\beta \alpha].$$
 Note that  $Q_{\beta,\alpha}\in (H_\alpha \cap K_\alpha)\setminus (m_1 \cup m_2)$ for every $\alpha \in \Fq^\ast$. Moreover, the points $Q_{\beta,\alpha}$ are all on the line $\ell_{\beta}:=\langle [1,\beta,0,0],[0,0,1,\beta]\rangle=\{[x,\beta x, y, \beta y] \mid [x,y] \in PG(1,q)\}$.

 With this notation we define $\M_{\beta}$ to be the set
 $$\M_{\beta}:=\ell_1 \cup \ell_2 \cup \ell_3 \cup \ell_4 \cup \{Q_{\beta,\alpha} \mid \alpha \in \Fq^\ast\}.$$
 
\end{construction}

\begin{theorem}\label{thm:constructionP3}
The set $\M_{\beta}$ is a minimal cutting blocking set in $PG(3,q)$, for every $\beta \in \Fq^\ast$.
\end{theorem}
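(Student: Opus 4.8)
\textbf{Proof strategy for Theorem \ref{thm:constructionP3}.}
The plan is to use the characterization from Proposition \ref{prop:charactcutting}: I must show that for every plane $H$ in $PG(3,q)$ one has $\langle H \cap \M_\beta \rangle = H$, and then that removing any single point of $\M_\beta$ destroys this property. For the cutting part, I would case-split on how many of the four base points $P_1,\dots,P_4$ lie on $H$. Since the $P_i$ are in general position, at most three of them can lie on a plane, so $H$ misses at least one $P_i$.

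\emph{Case: $H$ contains none of the $P_i$.} Then $H$ meets each of the four edges $\ell_1,\dots,\ell_4$ in a single point distinct from all the $P_i$'s; these four points already span a plane unless they happen to be collinear, and the cyclic configuration of the edges (the quadrilateral $P_1P_2P_3P_4$) forces them not to be. Concretely, if $H$ meets $\ell_1=\langle P_1,P_2\rangle$ and $\ell_2=\langle P_2,P_3\rangle$ in points $A,B$ then $\langle A,B\rangle$ is a line through neither $P_1$ nor $P_3$; adding the point of $H\cap\ell_3$ (on $\langle P_3,P_4\rangle$) produces a plane, since otherwise $P_3$ or $P_4$ would end up in $H$. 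I would write this out with coordinates, using that $P_i$ are the standard basis vectors. \emph{Case: $H$ contains exactly one, two, or three of the $P_i$.} Here the base points already contribute a flat of dimension $0$, $1$, or $2$; when three $P_i$ lie on $H$ we are done immediately ($H=\langle P_i,P_j,P_\ell\rangle$), and when one or two lie on $H$ I combine them with the transversal points $H\cap\ell_i$ on the edges missing those base points, exactly as in the proof of Theorem \ref{thm:tetraedro}, to reach dimension $3$. The extra points $Q_{\beta,\alpha}$ are only needed for minimality, not for the cutting property, so in this first half they can be ignored (a superset of a cutting blocking set is cutting).

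\emph{Minimality.} I must show $\M_\beta\setminus\{P\}$ fails to be cutting for every $P\in\M_\beta$, exhibiting for each $P$ a plane $H$ with $\dim\langle H\cap(\M_\beta\setminus\{P\})\rangle\le 1$. For $P$ an interior point of an edge $\ell_i$, say $P\in\ell_1\setminus\{P_1,P_2\}$, I take $H=\langle P, P_3, P_4\rangle$ (a plane, since it contains $\ell_3$ and meets $\ell_1$ only at $P$): then $H\cap\M_\beta$ consists of $P$, the line $\ell_3$, and possibly some $Q_{\beta,\alpha}$, but by the defining incidences $Q_{\beta,\alpha}$ lies on $H_\alpha\cap K_\alpha$, a line meeting $m_1\cup m_2$, and one checks by coordinates that no $Q_{\beta,\alpha}$ lies on this particular $H$; so removing $P$ leaves only $\ell_3$, a line. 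For $P=Q_{\beta,\alpha_0}$ one of the added points, I take the plane $H_{\alpha_0}$ from the sheaf through $m_1$: by construction $H_{\alpha_0}$ avoids $\langle\ell_1,\ell_2\rangle$ and $\langle\ell_3,\ell_4\rangle$, so $H_{\alpha_0}\cap\M_\beta$ meets each $\ell_i$ in a single point on $m_1$ or $m_2$, hence $H_{\alpha_0}\cap\M_\beta$ lies in $m_1\cup m_2\cup\{Q_{\beta,\alpha_0}\}$; removing $Q_{\beta,\alpha_0}$ drops it into $m_1\cup m_2$, and I show $\langle(m_1\cup m_2)\cap H_{\alpha_0}\rangle$ is a line (the two transversal points together with the fixed intersection points span at most a line by the coordinate description $H_\alpha\cap K_\alpha=\{[x,y,\alpha x,\alpha y]\}$). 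For $P$ one of the base points $P_i$, I argue as in Theorem \ref{thm:tetraedro}: choose a line $\Lambda\subseteq\langle P_j,P_\ell,P_m\rangle$ avoiding $P_j,P_\ell,P_m$ (exists by Lemma \ref{lem:avoidinghyper}), set $H=\langle\Lambda,P_i\rangle$; then $H\cap\M_\beta\subseteq\Lambda\cup\{P_i\}$ up to the added points, and again a coordinate check rules out the $Q_{\beta,\alpha}$ on this $H$ for a suitable choice of $\Lambda$.

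\textbf{Main obstacle.} The routine-looking but genuinely delicate part is the bookkeeping of the added points $Q_{\beta,\alpha}=[1,\beta,\alpha,\beta\alpha]$: in both the cutting argument (making sure they never hurt, which is automatic since they enlarge the set) and especially in the minimality argument, I must verify for each chosen witness plane $H$ that it contains at most one of the $Q_{\beta,\alpha}$ and that this is compatible with the claimed low-dimensional intersection. Since all the $Q_{\beta,\alpha}$ lie on the single line $\ell_\beta=\{[x,\beta x,y,\beta y]\}$, a plane contains two of them only if it contains $\ell_\beta$; so the crux is checking that none of my witness planes contains $\ell_\beta$, and determining exactly which $\alpha$ (if any) gives $Q_{\beta,\alpha}\in H$. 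This is pure linear algebra over $\Fq$ with the standard-basis coordinates, and I expect it to go through cleanly for every $\beta\in\Fq^\ast$, but it is where the real work lies; the geometric skeleton of the proof is otherwise identical in spirit to Theorem \ref{thm:tetraedro}.
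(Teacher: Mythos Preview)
There is a genuine gap in the cutting part of your argument. You assert that ``the extra points $Q_{\beta,\alpha}$ are only needed for minimality, not for the cutting property,'' and propose to show that the quadrilateral $\mathcal{N}:=\ell_1\cup\ell_2\cup\ell_3\cup\ell_4$ is already cutting. This is false. Take any plane $H_\alpha$ from the sheaf through the diagonal $m_1=\langle P_1,P_3\rangle$ (with $\alpha\in\Fq^*$). In coordinates $H_\alpha=\{[x,y,z,\alpha y]\}$, and one checks directly that $H_\alpha\cap\ell_i$ is exactly $\{P_1\}$ or $\{P_3\}$ for each $i$, so $H_\alpha\cap\mathcal{N}=\{P_1,P_3\}$ spans only the line $m_1$, not a plane. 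Your case analysis misses precisely this: when $H$ contains the two \emph{non-adjacent} vertices $P_1,P_3$ (or $P_2,P_4$), every edge of the $4$-cycle passes through one of them, so there are no ``transversal points on the edges missing those base points'' to combine with. The analogy with Theorem~\ref{thm:tetraedro} fails because that construction uses all $\binom{k}{2}$ edges, including the diagonals $m_1,m_2$; here those diagonals are absent from $\M_\beta$, and the points $Q_{\beta,\alpha}$ are what replace them. In the paper's proof this is exactly the residual case: for $H=H_\alpha$ one has $\langle H\cap\M_\beta\rangle=\langle P_1,P_3,Q_{\beta,\alpha}\rangle=H_\alpha$, so the added point is indispensable.

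Your minimality argument also has a smaller leak. For an interior point $P=[1,t,0,0]\in\ell_1$ you take the witness plane $H=\langle P,P_3,P_4\rangle=\{x_2=tx_1\}$. But $Q_{\beta,\alpha}=[1,\beta,\alpha,\beta\alpha]\in H$ iff $t=\beta$, and in that case \emph{all} of $\ell_\beta$ lies in $H$; then $H\cap(\M_\beta\setminus\{P\})\supseteq\ell_3\cup\{Q_{\beta,\alpha}:\alpha\in\Fq^*\}$ still spans $H$, so your witness fails for the specific point $P=[1,\beta,0,0]$. The paper avoids this by a pigeonhole: it looks at the whole pencil of planes through $\langle P_3,R\rangle$, notes that $q$ of them meet $\mathcal{N}\setminus\{R\}$ in a set spanning only a line, and observes that since each plane meets $\ell_\beta$ in at most one point and there are only $q-1$ points $Q_{\beta,\alpha}$, at least one plane in the pencil misses them all.
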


\begin{proof}
Let $H$ be a hyperplane of $PG(3,q)$, that is a plane. We call $\mathcal N$ the union of the four lines.
First, it is easy to see that if $H$ contains a line $\ell_i$, then $\langle H\cap \M \rangle$ is an hyperplane, since it contains at least another point not on $\ell_i$. 
Suppose that $H$ meets a line $\ell_{i}$ in only one point $R_{i}$ distinct from $P_i$ and $P_{i+1}$. Without loss of generality, we can assume $i=1$. Hence $\langle \M_{\beta}\cap H \rangle \supseteq \langle \mathcal N \cap H \rangle =:\Lambda$. Now, observe that $\langle \Lambda, P_1\rangle$ contains at least the line $\ell_{1}$, a point on $\ell_2$ distinct from $P_2$ and another point on $\ell_4$ different from $P_1$. Hence $$\langle \Lambda,P_1\rangle \supseteq \langle \ell_1,\ell_2,\ell_4\rangle \supseteq\langle P_1,P_2,P_3,P_4\rangle=PG(3,q),$$
which implies $\dim(\Lambda)=2$. It remains to analyze the only case left, which is $\mathcal N \cap H =\{P_1,P_3\}$ (the case $\mathcal N\cap H=\{P_2,P_4\}$ is symmetric). In this case, necessarily $H=H_{\alpha}$, for some $\alpha \in \Fq^\ast$, and so $\langle H\cap \M_{\beta}\rangle =\langle P_1,P_3,Q_{\beta,\alpha}\rangle=H_{\alpha}=H$. This shows that $\M_{\beta}$ is a cutting blocking set.

It remains to prove that $\M_{\beta}$ is minimal. Clearly, we can not remove any of the points $Q_{\beta,\alpha}$'s, since $\M_{\beta}\setminus \{Q_{\beta,\alpha}\}$ meets $H_\alpha$ only in $P_1$ and $P_3$. The same happens if we remove one of the points $P_i$'s. Indeed, $\M_{\beta}\setminus \{P_1\}$ meets $H_\alpha$ only in $\{P_3,Q_{\beta,\alpha}\}$, for every $\alpha \in\Fq^\ast$ (and symmetrically with $\M_{\beta}\setminus\{P_3\}$). The same happens with the hyperplanes $K_\alpha$'s if we remove $P_2$ or $P_4$. It is left to prove that if we remove a point $R$ on one of the lines, say $\ell_1$, the resulting set $\M_{\beta}\setminus\{R\}$ is not cutting. Take the point $P_3$ and consider the sheaf of planes containing the line $\langle P_3,R\rangle$. Every plane of this sheaf meets the line $\ell_4$ in exactly one point. Hence, the sheaf is parametrized by the points on the line $\ell_4$, and we can write it as $\{H_S \mid S \in \ell_4\}$, where clearly $H_S=\langle P_3,R,S\rangle$. Consider now the intersection between $H_S$ and $\tilde{\mathcal N}:=\mathcal N\setminus\{R\}$, i.e. the union of all the four lines without the point $R$. If $S=P_1$ then $H_{P_1}\cap \tilde{\mathcal N}=(\ell_1\setminus \{R\})\cup \{P_3\}$, which spans a hyperplane. It is not difficult to see that in all the remaining $q$ cases it spans a line. However, every $H_S$ meets the line $\ell_{\beta}$ in exactly a point. Hence it  contains at most one of the $Q_{\beta,\alpha}$'s. However, we have $q$ hyperplanes $H_S$ and only $q-1$ points. Therefore, necessarily there exists $S\in \ell_4$ such that $H_S\cap \M_{\beta}=\{P_3,R,S\}$ and thus $\M_{\beta}\setminus\{R\}$ is not cutting.

\end{proof}

\begin{corollary}\label{cor:P3}
 For every $\beta \in \Fq^\ast$, Construction \ref{constrdim4}  produces  a
 $[5q-1,4,3q-2]_q$ reduced minimal code $\C_{\beta}$.
\end{corollary}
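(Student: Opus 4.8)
The plan is to apply the correspondence of Theorem~\ref{thm:correspondence} to the minimal cutting blocking set $\M_\beta$ constructed in Construction~\ref{constrdim4} and verified in Theorem~\ref{thm:constructionP3}. Since $\M_\beta$ is a minimal cutting blocking set in $PG(3,q)$ with $m(P)=1$ for every point, by Theorem~\ref{thm:correspondence} the associated projective system $\Psi([\M_\beta])$ is a reduced minimal code of dimension $k=4$. The main computations are therefore just counting: the length $n$ and the minimum distance $d$.

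\begin{proof}
By Construction~\ref{constrdim4} and Theorem~\ref{thm:constructionP3}, $\M_\beta$ is a minimal cutting blocking set in $PG(3,q)$ in which every point has multiplicity one. By Theorem~\ref{thm:correspondence}, the projective $[n,4,d]_q$ system associated to $\M_\beta$ via $\Psi$ is (the equivalence class of) a reduced minimal code $\C_\beta$. It remains to compute $n$ and $d$.

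For the length, recall that $\M_\beta = \ell_1\cup\ell_2\cup\ell_3\cup\ell_4\cup\{Q_{\beta,\alpha}\mid \alpha\in\Fq^\ast\}$. Each line $\ell_i$ has $q+1$ points, and consecutive lines $\ell_i$ and $\ell_{i+1}$ (indices modulo $4$) meet in the single point $P_{i+1}$, while $\ell_1\cap\ell_3=\ell_2\cap\ell_4=\emptyset$. Hence $|\ell_1\cup\ell_2\cup\ell_3\cup\ell_4| = 4(q+1)-4 = 4q$. Each $Q_{\beta,\alpha}=[1,\beta,\alpha,\beta\alpha]$ has all coordinates nonzero (as $\beta,\alpha\in\Fq^\ast$), so it lies on none of the coordinate-spanned lines $\ell_i$, and the $q-1$ points $Q_{\beta,\alpha}$ are pairwise distinct since their first two homogeneous coordinates are fixed to $[1,\beta]$ while the ratio of the last two to the first two is $\alpha$. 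Therefore $n = 4q + (q-1) = 5q-1$.

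For the minimum distance, by the code–projective-system dictionary we have $d = n - \max_{H}|\M_\beta\cap H|$, the maximum taken over planes $H$ of $PG(3,q)$. A plane $H$ through one of the lines, say $\ell_1$, contains $q+1$ points of $\ell_1$, plus its intersection points with $\ell_2,\ell_3,\ell_4$ and with $\ell_\beta$; carefully going through the cases as in the proof of Theorem~\ref{thm:constructionP3} (distinguishing whether $H$ also contains a second line, and whether it passes through one of the $Q_{\beta,\alpha}$'s), one checks that the largest intersection is attained by a plane containing two adjacent lines, e.g. $\langle \ell_1,\ell_2\rangle$, which meets $\M_\beta$ in all of $\ell_1\cup\ell_2$ (that is $2q+1$ points) together with one further point on $\ell_3$, one on $\ell_4$, and possibly one $Q_{\beta,\alpha}$. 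Tallying these gives $\max_H|\M_\beta\cap H| = q+1$ at most from the remaining structure beyond a chosen line; completing the bookkeeping yields $\max_H|\M_\beta\cap H| = 2q+1$, hence $d = (5q-1)-(2q+1) = 3q-2$.

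Thus $\C_\beta$ is a $[5q-1,4,3q-2]_q$ reduced minimal code, as claimed.
\end{proof}

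The step I expect to require the most care is the minimum-distance computation, i.e. identifying which plane meets $\M_\beta$ in the most points. One must exhaust the possibilities for a plane $H$: it may contain zero, one, or two of the lines $\ell_i$; it may be one of the $H_\alpha$ or $K_\alpha$; and one must track how many of the $q-1$ points $Q_{\beta,\alpha}$ it can absorb (at most one, since $H$ meets $\ell_\beta$ in a single point). The planes $\langle\ell_i,\ell_{i+1}\rangle$ are the natural candidates for the maximum, and one should double-check that the extra intersection points with the two non-contained lines and with $\ell_\beta$ do not push the count higher than $2q+1$; the rest of the argument is the routine counting carried out above.
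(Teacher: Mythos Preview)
Your overall strategy is exactly the paper's: invoke Theorem~\ref{thm:correspondence} on the minimal cutting blocking set $\M_\beta$ from Theorem~\ref{thm:constructionP3}, then count $n$ and $d$. The length computation is correct.

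The minimum-distance argument, however, has a genuine gap. First, your description of $\langle \ell_1,\ell_2\rangle\cap\M_\beta$ is wrong: that plane is $\{x_4=0\}$, so its intersections with $\ell_3$ and $\ell_4$ are $P_3$ and $P_1$, which are \emph{already} in $\ell_1\cup\ell_2$, and no $Q_{\beta,\alpha}=[1,\beta,\alpha,\beta\alpha]$ lies on it since $\beta\alpha\neq 0$. So the intersection is exactly $2q+1$, not ``$2q+1$ together with further points.'' Second, and more importantly, you never actually verify that $2q+1$ is the maximum over all planes; you just assert it after ``completing the bookkeeping.'' The paper does this systematically by observing that any three of the five lines $\ell_1,\ell_2,\ell_3,\ell_4,\ell_\beta$ span $PG(3,q)$, so a plane contains at most two of them, and then bounding $|H\cap\M_\beta|$ according to whether $H$ contains zero, one, or two of these lines (yielding at most $5$, $q+3$, and $2q+1$ respectively). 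Without that case split, your proof does not establish $d=3q-2$; it only exhibits a plane achieving $2q+1$, which gives $d\le 3q-2$ but not the reverse inequality.
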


\begin{proof}
 Using the characterization result of Theorem \ref{thm:correspondence}, clearly the code obtained by the minimal cutting blocking set $M_{\beta}$ via $(\Phi,\Psi)$ is a $[5q-1,4]_q$ reduced minimal code. It is left to determine the minimum distance of $\C_{\beta}$, which corresponds via $(\Phi,\Psi)$ to the value $(5q-1)-\max\{|H\cap \M_{\beta}| : \dim(H)=2\}$. Any hyperplane $H$ can contain at most two of the lines $\ell_i$'s and $\ell_{\beta}$, since every three of them span the whole space $PG(3,q)$. If it contains none of them, then $|\M_{\beta}\cap H|\leq 5$. If $H$ contains only one of the $\ell_i$'s then $|\M_{\beta}\cap H|\leq q+3$. In the case $H$ contains only $\ell_{\beta}$ we also have $|\M_{\beta}\cap H|\leq q+3$. Finally, the only case in which $H$ contains a pair of lines is when $H=\langle \ell_i, \ell_{i+1}\rangle$, for $i \in \{1,2,3,4\}$ (where the indices are taken modulo $4$). In this case, we can see that $H$ does not contain any of the points $Q_{\alpha,\beta}$, and therefore, $|H\cap M_{\beta}|=2q+1$. For every prime power $q$, the maximum among these values is given by $2q+1$, and this concludes the proof. 
\end{proof}

We conclude this subsection with explanatory examples.


\begin{example}
 According to Corollary \ref{cor:P3}, Construction \ref{constrdim4} for $q=2$ and $\beta=1$ gives rise to a minimal $[9,4,4]_2$ code, whose generator matrix is
 $$G=\begin{pmatrix}
1 & 1 & 0 & 0 & 0 & 0 & 0 & 1 & 1 \\
0 & 1 & 1 & 1 & 0 & 0 & 0 & 0 & 1 \\
0 & 0 & 0 & 1 & 1 & 1 & 0 & 0 & 1 \\
0 & 0 & 0 & 0 & 0 & 1 & 1 & 1 & 1 \\
\end{pmatrix}.$$
 It was proved by computer search with {\sc Magma} \cite{magma} that 9 is the shortest length that a minimal code of dimension $4$ can have over $\F_2$. Moreover, always with {\sc Magma} we observed that this is the unique $[9,4]_2$ minimal code up to equivalence. 

\end{example}

\begin{example} 
For $q=3$ and $\beta=2$, Construction \ref{constrdim4} gives the $[14,4,7]_3$ reduced minimal code $\C_2$ whose generator matrix is 
$$G=\left( \begin{array}{cccccccccccccc}
1 & 1 & 1 & 0 & 0 & 0 & 0 & 0 & 0 & 0 & 1 & 2 & 1 & 1\\
0 & 1 & 2 & 1 & 1 & 1 & 0 & 0 & 0 & 0 & 0 & 0 & 2 & 2\\
0 & 0 & 0 & 0 & 1 & 2 & 1 & 1 & 1 & 0 & 0 & 0 & 1 & 2\\
0 & 0 & 0 & 0 & 0 & 0 & 0 & 1 & 2 & 1 & 1 & 1 & 2 & 1\\
\end{array} \right).$$
\end{example}

\subsection{Minimal codes of dimension $5$}\label{subse:dim5}
Here we show another special construction for minimal codes of dimension $5$, using cutting blocking sets in $PG(4,q)$ whose  size is smaller than the one provided in Theorem \ref{thm:tetraedro}. When $q=2$, we provide also an alternative construction for minimal codes of dimension $5$ as minimal blocking sets in $PG(4,2)$.

\begin{construction}\label{pentagon}
Let $P_1,P_2, P_3, P_4, P_5$ be five points in general position in $PG(4,2)$. Without loss of generality, we can assume that they are the (representatives of the) standard basis vectors. Consider the lines $\ell_i = \langle P_i, P_{i+1} \rangle $ for $i\in\{1,2,3,4,5\}$, where  the indices are taken modulo $5$. Consider now for $i \in \{1,2,3,4\}$ a point $Q_i \in \ell_i\setminus\{P_i,P_{i+1}\}$, and define the lines $m_1:=\langle Q_1, Q_3 \rangle$,  $m_2:=\langle Q_2, Q_4 \rangle$ and $m_3:=\langle Q_1, Q_4 \rangle$. 

With this notation, we define the set $\M:=\ell_1\cup\ell_2\cup\ell_3\cup\ell_4\cup\ell_5\cup m_1 \cup m_2 \cup m_3 $.
We will refer to the above construction also as the \emph{pentagonal construction}.
\end{construction}

\begin{theorem}
The set $\M$ defined  in Construction \ref{pentagon} is a  cutting blocking set in $PG(4,q)$. 
\end{theorem}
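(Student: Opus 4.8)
The plan is to use the characterization of cutting blocking sets from Proposition~\ref{prop:charactcutting}: it suffices to show that for every hyperplane (i.e.\ $3$-flat) $H$ in $PG(4,q)$ one has $\langle \M \cap H\rangle = H$, i.e.\ $\dim\langle \M\cap H\rangle = 3$. I would split the argument according to how $H$ meets the five ``frame'' points $P_1,\dots,P_5$, which are in general position, and the three extra lines $m_1,m_2,m_3$. Let $\mathcal N := \ell_1\cup\cdots\cup\ell_5$, the pentagon of lines on the frame points. First note that since the $P_i$ are in general position, $H$ cannot contain all five of them; say $P_1\notin H$. Then $H$ meets each of the two lines through $P_1$, namely $\ell_5=\langle P_5,P_1\rangle$ and $\ell_1=\langle P_1,P_2\rangle$, in a single point distinct from $P_1$.

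The core step is the case analysis on $r := |\{i : P_i \in H\}|$, where $r\in\{0,1,2,3,4\}$, exactly as in the proof of Theorem~\ref{thm:tetraedro} but now the pentagon has only the five ``cyclic'' lines $\ell_i$ rather than all $\binom{5}{2}$ lines. If $H$ contains one of the lines $\ell_i$ or $m_j$ entirely, then $H\cap\M$ already contains that line plus at least one further point (some $P_i$ or a point of another $\ell_j$ not in the containing plane), and since any line plus a point off it spans a plane — wait, we need a $3$-flat: in fact a line together with two points spanning with it a $3$-space; one checks directly that $H\cap\M$ contains enough of the frame to span $H$. The main work is when $H$ contains none of the eight lines. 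Then for each $\ell_i$, $H\cap\ell_i$ is a single point $R_i$. Following the template of Theorem~\ref{thm:tetraedro}: set $\Lambda := \langle \mathcal N\cap H\rangle \subseteq \langle\M\cap H\rangle$. Using $P_1\notin H$, the flat $\langle\Lambda,P_1\rangle$ contains $\ell_5$ and $\ell_1$ (each being spanned by $P_1$ and the point $R_i\in H\cap\ell_i$), hence contains $P_2,P_5$; one then argues that the remaining intersection points $R_2,R_3,R_4$ on $\ell_2,\ell_3,\ell_4$ force $\langle\Lambda,P_1\rangle$ to contain $P_3$ and $P_4$ as well, so $\langle\Lambda,P_1\rangle = PG(4,q)$, whence $\dim\Lambda = 3$ and $\langle\M\cap H\rangle = H$. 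The delicate sub-case, analogous to the $\mathcal N\cap H=\{P_1,P_3\}$ case of Theorem~\ref{thm:constructionP3}, is when $\mathcal N\cap H$ is so small (e.g.\ $H$ meets only two of the frame points and the $R_i$'s nearly coincide) that the pentagon alone does not span $H$; here the extra lines $m_1,m_2,m_3$ must be brought in — one shows that in precisely these configurations $H$ meets at least one $m_j$ in a point not already in $\langle\mathcal N\cap H\rangle$, restoring the full span. Cataloguing exactly which hyperplanes are ``pentagon-deficient'' and checking $m_1,m_2,m_3$ cover all of them is the main obstacle, and it is why the construction uses three auxiliary lines rather than fewer.

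Concretely, I would parametrize: a $3$-flat $H$ in $PG(4,q)$ with equation $\sum a_i x_i = 0$; then $P_i\in H$ iff $a_i=0$, and $H\cap \ell_i \ne\emptyset$ in a point off the $P$'s iff $a_i, a_{i+1}$ are both nonzero, the intersection point being $[\,\dots, a_{i+1}, -a_i,\dots\,]$ in coordinates $i,i+1$. Writing $r$ for the number of zero $a_i$'s, a short combinatorial check (the zero positions among a cyclic $5$-tuple) shows that $\langle\mathcal N\cap H\rangle$ has dimension $3$ except when the support pattern of $(a_1,\dots,a_5)$ is very special, and in those residual patterns one verifies by direct coordinate computation that $m_1=\langle Q_1,Q_3\rangle$, $m_2=\langle Q_2,Q_4\rangle$, or $m_3=\langle Q_1,Q_4\rangle$ contributes the missing dimension. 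Since the statement is asserted for general $q$ (the cutting property, not minimality), I would make sure every step above is field-size independent — the only place $q=2$ entered Construction~\ref{pentagon} was in singling out specific points $Q_i$, but the spanning arguments only use that $Q_i\in\ell_i\setminus\{P_i,P_{i+1}\}$, so they go through verbatim over any $\F_q$. Finally, invoking Proposition~\ref{prop:charactcutting} once more converts ``$\langle\M\cap H\rangle = H$ for all hyperplanes $H$'' into the conclusion that $\M$ is a cutting blocking set, completing the proof.
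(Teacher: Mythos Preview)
Your plan is essentially the same as the paper's proof: both argue via Proposition~\ref{prop:charactcutting}, set $\mathcal N=\ell_1\cup\cdots\cup\ell_5$, and split into cases according to $r=|\{i:P_i\in H\}|$, using the ``adjoin a frame point to $\langle\mathcal N\cap H\rangle$ and recover all of $PG(4,q)$'' trick from Theorem~\ref{thm:tetraedro} for the generic cases, and bringing in one of the $m_j$ in the degenerate ones.

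The one place where the paper is sharper than your sketch is in naming the pentagon-deficient hyperplanes explicitly. Your chain ``$R_2,R_3,R_4$ force $P_3,P_4$'' tacitly assumes $R_2\neq P_2$ and $R_4\neq P_5$, which fails precisely when some of the contained $P_i$'s are non-adjacent on the pentagon. The paper pins this down: the only configurations where $\langle\mathcal N\cap H\rangle$ is a plane rather than a $3$-flat are $r=3$ with index pattern $\{i,i+1,i+3\}$ (so $\ell_i\subseteq H$ and $\mathcal N\cap H=\ell_i\cup\{P_{i+3}\}$) and $r=2$ with pattern $\{i,i+2\}$ (so $\mathcal N\cap H=\{P_i,P_{i+2},R_{i+3}\}$). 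In each of these it then picks an $m_j=\langle Q_t,Q_s\rangle$ skew to $\ell_i$ (respectively, meeting $H$ off its endpoints), notes that the intersection point $S$ is distinct from $Q_t,Q_s$, and shows $\langle\Lambda,Q_s\rangle=PG(4,q)$. Your ``cataloguing'' step would arrive at exactly these two patterns, so the plan is sound; it is just worth being aware that the case split is finer than $r$ alone---within $r=2$ and $r=3$ the cyclic adjacency of the contained $P_i$'s is what decides whether the auxiliary lines are needed. Your final remark that the argument is field-size independent (only $Q_i\in\ell_i\setminus\{P_i,P_{i+1}\}$ is used) is correct and matches the paper.
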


\begin{proof}
 We first write $\mathcal N=\ell_1\cup\ell_2\cup\ell_3\cup\ell_4\cup\ell_5$ and $\mathcal N^\prime=m_1 \cup m_2 \cup m_3$. Let $H$ be a hyperplane, define the spaces $\Lambda:=\langle H \cap \M\rangle$ and $\Lambda_1:=\langle H \cap \mathcal N\rangle$ and consider the number $r$ of the  $P_i$'s that are also in $H$. Clearly $r\in \{0,1,2,3,4\}$. If $r=4$ it is clear that $\langle H\cap \mathcal N\rangle=H$. If $r=0$, then it is easy to see that $\langle \Lambda_1,P_1\rangle$ contains $\mathcal N$, and hence it is the whole $PG(4,q)$. Therefore $\dim(\Lambda_1)=\dim(\Lambda)=3$. Also if $r=1$, that is $P_1\in H$, then  $\langle \Lambda_1,P_2\rangle$ turns out to be the whole space, hence $\dim(\Lambda)=3$. Now assume that $r=3$. Then we have two possibilities for the indices of these points. They can be consecutive (modulo $5$), say $P_1,P_2,P_3$, in which case $H$ contains their span plus a point on $\ell_4$. Clearly this implies $\dim(\Lambda)=\dim(\Lambda_1)=3$. The second case is when the indices are of the form $i,i+1,i+3$, i.e. $\langle \ell_i,P_{i+3}\rangle \subseteq H$. Then $H$ intersects at least one  line $m_j=\langle Q_t,Q_s\rangle$ skew to $\ell_i$ in another point $R$ distinct from $Q_1,Q_2,Q_3,Q_4$. Consider then $\langle \Lambda, Q_s \rangle\supseteq \langle \ell_i,m_j, P_{i+3}\rangle =PG(4,q)$. Hence also in this case $\dim(\Lambda)=3$. It remains to show the case $r=2$. If the indices of these two points are consecutive, then $H$ contains a line $\ell_i$ and two more points, one on $\ell_{i+2}$ and one on $\ell_{i+3}$. Clearly in this case $\langle \Lambda_1,P_{i+3} \rangle \supseteq \langle \ell_i, \ell_{i+2}, \ell_{i+3} \rangle=PG(4,q)$, and we conclude also in this case. Suppose now that the two points in $H$ are $P_i$ and $P_{i+2}$. Then $H$ will also intersect the line $\ell_{i+3}$ in a point $R$, and at least a line $m_j=\langle Q_t,Q_s\rangle$ in a point $S$, which is different from $Q_t$ and $Q_s$. Then it is  easy to see that also in this case $\langle \Lambda, Q_s \rangle=PG(4,q)$, which finally shows that $\M$ is cutting.
\end{proof}

\begin{corollary}
Construction \ref{pentagon} produces a $[8q-3,5,4q-3]_q$  minimal code.
\end{corollary}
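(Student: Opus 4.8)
The plan is to invoke the correspondence $(\Phi,\Psi)$ of Theorem~\ref{thm:correspondence} once more: since $\M$ is a cutting blocking set in $PG(4,q)$, the code $\C$ obtained by taking the points of $\M$ as columns of a generator matrix is a minimal code of dimension $5$. So the only real content is to compute the length $n=|\M|$ and the minimum distance $d=n-\max\{|H\cap\M|\mid \dim H=3\}$.

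For the length, I would count $|\M|$ by inclusion--exclusion over the five lines $\ell_1,\dots,\ell_5$ and the three lines $m_1,m_2,m_3$. Each of the $\ell_i$ has $q+1$ points and they form a pentagon, so consecutive pairs meet in a $P_i$ while non-consecutive pairs are skew; the union $\mathcal N=\ell_1\cup\cdots\cup\ell_5$ therefore has $5(q+1)-5=5q$ points. The lines $m_1=\langle Q_1,Q_3\rangle$, $m_2=\langle Q_2,Q_4\rangle$, $m_3=\langle Q_1,Q_4\rangle$ each meet $\mathcal N$ in exactly two of the points $Q_1,\dots,Q_4$ (which already lie on $\mathcal N$), and one must check how $m_1,m_2,m_3$ intersect each other — here $m_1$ and $m_3$ share $Q_1$, $m_2$ and $m_3$ share $Q_4$, and $m_1,m_2$ meet in at most one further point, but in $PG(4,q)$ generically they are skew; being careful about these incidences, each $m_j$ contributes $q+1-2=q-1$ new points, for $3(q-1)$ additional points. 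Hence $n=5q+3(q-1)=8q-3$, matching the claim. (The case $q=2$ in Construction~\ref{pentagon} has $Q_i$ the unique non-basis point of $\ell_i$, and the same count gives $16-3=13=8\cdot 2-3$.)

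For the minimum distance I would determine the largest intersection $|H\cap\M|$ over hyperplanes (i.e.\ $3$-flats) $H$. As in the proof of Corollary~\ref{cor:P3}, a hyperplane can contain at most a limited configuration of the eight lines, since any three lines in ``general-enough'' position span $PG(4,q)$; going through the cases by the parameter $r=|\{i\mid P_i\in H\}|$ used in the previous theorem's proof, the extremal $H$ will be one containing two of the $\ell_i$'s (or one $\ell_i$ together with a suitable $m_j$), contributing $2(q+1)-1=2q+1$ points from those two lines plus a bounded number of the remaining points $Q_i$ and single intersection points on the other lines. I expect the maximum to come out to $4q+1$, giving $d=(8q-3)-(4q+1)=4q-4$; but since the claimed value is $d=4q-3$, the correct extremal count must be $4q$, so the hyperplane achieving the maximum meets $\M$ in $4q$ points — plausibly two full lines plus two more of the $Q_i$'s and two more intersection points, $2(q+1)+4=2q+6$ being too small, so in fact the maximal $H$ must contain three of the eight lines in a degenerate configuration; resolving exactly which hyperplane is extremal is the delicate bookkeeping step.

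The main obstacle is precisely this last case analysis for $d$: one must carefully enumerate which subsets of the eight lines $\{\ell_1,\dots,\ell_5,m_1,m_2,m_3\}$ can lie in a common hyperplane of $PG(4,q)$, using that the $P_i$ are in general position and that the $Q_i,m_j$ are chosen generically, and then for each such configuration count the leftover points of $\M$ lying in that hyperplane. This is entirely analogous to the argument in Corollary~\ref{cor:P3} but with more lines and more incidences among the $m_j$'s, so it is routine in spirit but error-prone; the length computation and the appeal to Theorem~\ref{thm:correspondence} are straightforward.
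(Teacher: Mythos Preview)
Your overall plan coincides with the paper's: invoke Theorem~\ref{thm:correspondence} for minimality and dimension, count $|\M|$ by inclusion--exclusion over the eight lines, then find the minimum distance as $n-\max_H |H\cap\M|$. Your length computation $|\M|=5q+3(q-1)=8q-3$ is correct, and your checks that $m_1,m_2$ are skew while $m_3$ meets $m_1,m_2$ only in $Q_1,Q_4$ are fine.

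The genuine gap is in your analysis of the extremal hyperplane. You oscillate between configurations with two or three of the eight lines and never reach the actual answer: the maximum $|H\cap\M|=4q$ is attained by hyperplanes containing \emph{four} of the eight lines. Concretely, $\langle P_1,P_2,P_3,P_4\rangle$ contains $\ell_1,\ell_2,\ell_3$ and, automatically, $m_1=\langle Q_1,Q_3\rangle$; these four lines cover $3q+1+(q-1)=4q$ points of $\M$ and one checks that no further point of $\M$ lies in this hyperplane. The paper lists exactly five such hyperplanes,
\[
H_1=\langle \ell_1,\ell_2,m_2,m_3\rangle,\;
H_2=\langle \ell_3,\ell_4,m_1,m_3\rangle,\;
H_3=\langle \ell_1,\ell_2,\ell_3,m_1\rangle,\;
H_4=\langle \ell_2,\ell_3,\ell_4,m_2\rangle,\;
H_5=\langle \ell_1,\ell_4,\ell_5,m_3\rangle,
\]
and asserts that any other hyperplane meets $\M$ in fewer points. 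Your bookkeeping stalled precisely because you underestimated how many of the construction lines can be coplanar in a $3$-flat; once you observe that three consecutive $\ell_i$'s already span a hyperplane and that the diagonal $m_j$ joining their two outer $Q$-points lies in the same span (and similarly for the mixed cases $\ell_i,\ell_{i+1},m_s,m_t$), the count $4q$ and hence $d=4q-3$ fall out directly.
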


\begin{proof}
 The fact that from Construction \ref{pentagon} we obtain a  $[8q-3,5]_q$  minimal code, simply follows from the characterization result of Theorem \ref{thm:correspondence}. The minimum distance can be computed observing that a hyperplane $H$ can contain at most $4$ lines among the defining lines of $\M$, and this happens only in five cases: $H_1=\langle \ell_1, \ell_2, m_2, m_3 \rangle$, $H_2=\langle \ell_3, \ell_4, m_1, m_3 \rangle$, $H_3=\langle \ell_1, \ell_2, \ell_3, m_1 \rangle$, $H_4=\langle \ell_2, \ell_3, \ell_4, m_2 \rangle$ and $H_5=\langle \ell_1, \ell_4, \ell_5, m_3 \rangle$. In these cases we have $|\M \cap H_i|=4q$, and the weights of the associated codewords are $8q-3-4q=4q-3$. In all the other cases, it is not difficult to see that any other hyperplane contains a smaller number of points of $\M$.  Hence, the minimum distance of the code is $4q-3$.
\end{proof}

In the binary case, the pentagonal construction gives the $[13,5,5]_2$ reduced minimal code whose generator matrix is

 $$G=\left(\begin{array}{ccccccccccccc}
1 & 1 & 0 & 0 & 0 & 0 & 0 & 0 & 0 & 1 & 1 & 0 & 1 \\
0 & 1 & 1 & 1 & 0 & 0 & 0 & 0 & 0 & 0 & 1 & 1 & 1 \\
0 & 0 & 0 & 1 & 1 & 1 & 0 & 0 & 0 & 0 & 1 & 1 & 0 \\
0 & 0 & 0 & 0 & 0 & 1 & 1 & 1 & 0 & 0 & 1 & 1 & 1 \\
0 & 0 & 0 & 0 & 0 & 0 & 0 & 1 & 1 & 1 & 0 & 1 & 1
\end{array}\right).$$

By {\sc Magma} computations, we can observe that $13$ is the shortest length for a binary minimal code of dimension $5$. 

For $q=3$ the code obtained is $[21,5,9]_3$, but with the aid of {\sc Magma} we found a $[20,5,9]_3$ minimal code. Hence, in general Construction \ref{pentagon} does not provide the smallest cutting blocking set in $PG(4,q)$.


In this sequel, we provide a construction of minimal codes of dimension 5 over $\F_2$, using cutting blocking sets in $PG(4,2)$, different from the pentagonal construction. We will refer to it as the \emph{hexagonal construction}.







\begin{construction}\label{esagono}
Let $\{P_1,P_2, P_3, P_4, P_5, P_6\}$ be  a projective frame in $PG(4,2)$. Without loss of generality, we can assume $P_1, P_2, P_3, P_4, P_5$ to be the (representatives of the) standard basis vectors and $P_6 = [1,1,1,1,1]$. Consider the lines $\ell_i = \langle P_i, P_{i+1} \rangle $ for $i\in\{1,2,3,4,5,6\}$, where the indices are taken modulo $6$. Let $Q:=[1,0,1,0,1]$.

The set $\mathcal{M}:=\ell_1\cup\ell_2\cup\ell_3\cup\ell_4\cup\ell_5\cup\ell_6\cup \{Q\}$ is a minimal cutting blocking set in $PG(4,2)$. This is not difficult to verify by hand or computer search.
\end{construction}

This construction produces the $[13,5,5]_2$ reduced minimal code generated by the following matrix:

$$ G_2=\left( \begin{array}{ccccccccccccc} 
1 & 0 & 0 & 0 & 0 & 1 & 1 & 0 & 0 & 0 & 1 & 0 & 1\\
0 & 1 & 0 & 0 & 0 & 1 & 1 & 1 & 0 & 0 & 1 & 1 & 0\\
0 & 0 & 1 & 0 & 0 & 1 & 0 & 1 & 1 & 0 & 1 & 1 & 1\\
0 & 0 & 0 & 1 & 0 & 1 & 0 & 0 & 1 & 1 & 1 & 1 & 0\\
0 & 0 & 0 & 0 & 1 & 1 & 0 & 0 & 0 & 1 & 0 & 1 & 1\\

\end{array} \right)$$

 With the aid of {\sc Magma} we  observed that the code constructed in this way and  the one obtained from the pentagonal construction are the only two $[13,5,5]_2$ minimal codes up to equivalence.

The hexagonal construction can be adapted to $q=3$. It gives a $[20,5,9]_3$ minimal codes, which is the shortest code that we could obtain.
Unfortunately, it seems difficult to generalize it for minimal codes of dimension $5$ over $\F_q$, for $q>3$.

\section{Conclusions and open problems}\label{sec:conclusion}

In this paper we characterized minimal linear codes from a geometrical point of view. Note that this characterization has been independently and simultaneously remarked also by Tang \emph{et al.} in \cite{tang2019characterization}.
This geometric approach allowed to prove new bounds on the length and the minimum distance of minimal codes, depending on their dimension and on the cardinality of the underlying field. 
Moreover, we proved that the family of minimal linear codes is asymptotically good. However, calculations in {\sc Magma} show
that our lower bound on the length is not sharp, so that there is still room for improvement.\\

\noindent \textbf{Problem 1.} Is it possible to prove a sharp lower bound on the length of a minimal linear code?\\

The already cited existence result of Chabanne \emph{et al.} of infinite sequences of minimal linear codes with fixed rate and growing length is unfortunately not constructive, and we are not aware of such a construction in the literature. The geometrical interpretation of minimal linear codes as cutting blocking sets should provide a way to construct codes with a length growing linearly in the dimension, by reducing as much as possible the number of points. However, in the construction by the tetrahedron of Theorem \ref{thm:tetraedro}, the length grows as the square of the dimension, and the arguments in Subsection \ref{subse:dim4} do not seem generalizable to higher dimensions.\\

\noindent \textbf{Problem 2.} Is it possible to give an explicit construction of an infinite sequence of minimal linear codes whose lengths are growing linearly in the dimension?\\

\noindent \textbf{Problem 3.} How to generalize Construction \ref{constrdim4} to dimension greater than $4$?\\

\noindent \textbf{Problem 4.} How to generalize the pentagonal construction (Construction \ref{pentagon}) to dimension greater than $5$?  How to generalize the hexagonal construction  (Construction \ref{esagono}) to every prime power $q$ and to dimension greater than $5$?

\medskip
Finally, in all our constructions of minimal codes, and in other constructions provided in \cite{bartoli2019minimalLin, bartoli2019inductive, bonini2019minimal, tang2019characterization}, we observed that the minimum distance satisfies $d\geq (k-1)(q-1)+1$, where this bound is met with equality in all our constructions of reduced minimal codes. Therefore, also motivated by Remark \ref{rem:minimumdistance}, where we observed that the bound of Theorem \ref{thm:dgeqkq2} is not sharp, we propose the following conjecture.

\begin{conjecture}
 Let $\C$ be an $[n,k,d]_q$ minimal code. Then $$d \geq (k-1)(q-1)+1.$$
\end{conjecture}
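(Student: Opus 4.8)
The plan is to deduce the conjecture, in its sharp form, from the classical lower bound of Brouwer and Schrijver on affine blocking sets: a set of points in $AG(m,q)$ meeting every hyperplane has at least $m(q-1)+1$ points (for $m=2$ this is Jamison's theorem). First I would reduce to the reduced case: writing $(\M,m)=\Phi(\C)$, so that $\M\subseteq PG(k-1,q)$ is a cutting blocking set by Theorem \ref{thm:correspondence}, one has $d=\min_H \sum_{P\in \M\setminus H} m(P)\geq \min_H |\M\setminus H|$, where $H$ ranges over the hyperplanes of $PG(k-1,q)$; hence it suffices to prove that for every cutting blocking set $\M\subseteq PG(k-1,q)$ and every hyperplane $H$ one has $|\M\setminus H|\geq (k-1)(q-1)+1$ (the minimum of this quantity over $H$ being exactly the minimum distance of the corresponding reduced minimal code).

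The geometric core is the observation that, for any hyperplane $H$, the set $\mathcal A:=\M\setminus H$ is an affine blocking set of $AG(k-1,q):=PG(k-1,q)\setminus H$ with respect to hyperplanes. Indeed, every affine hyperplane of $AG(k-1,q)$ is of the form $H'\setminus(H'\cap H)$ for some projective hyperplane $H'\neq H$; by Proposition \ref{prop:charactcutting} we have $\langle \M\cap H'\rangle=H'$, so $\M\cap H'$ cannot be contained in the $(k-3)$-flat $H'\cap H$, and therefore $\M\cap H'$ contains a point outside $H$, i.e.\ a point of $\mathcal A$ lying on that affine hyperplane. Applying the Brouwer--Schrijver bound with $m=k-1$ gives $|\mathcal A|\geq (k-1)(q-1)+1$, and minimizing over $H$ yields $d\geq (k-1)(q-1)+1$. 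Sharpness would then be automatic, since for the tetrahedron $\M$ of Theorem \ref{thm:tetraedro} and a hyperplane $H$ through $k-1$ of the base points, $\mathcal A$ is precisely the standard extremal affine blocking set $\{P\}\cup\{\,\lambda e_i : \lambda\in \F_q^*,\ i\in\{1,\dots,k-1\}\,\}$ of size $(k-1)(q-1)+1$, so the reduced minimal codes of Theorem \ref{thm:tetraedrocode} would emerge as exactly the equality cases.

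The coding-theoretic and reduction steps are routine; the real content, and the only genuine obstacle, is the affine blocking set inequality itself, which is the hard (polynomial-method) theorem of Brouwer--Schrijver, and for which I see no shortcut in this setting. Beyond invoking it, the one thing to check carefully is that no step degenerates in small cases — for instance $k=2$, where $AG(1,q)$ has point-hyperplanes and the argument forces $\mathcal A=AG(1,q)$, giving $d\geq q=(2-1)(q-1)+1$ — but these are straightforward. If this program works, it does not merely confirm the conjecture: it characterizes the cutting blocking sets realizing the Brouwer--Schrijver extremal configuration as precisely those whose associated minimal codes meet the bound with equality.
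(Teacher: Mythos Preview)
The paper does not prove this statement: it is posed as a conjecture in Section~\ref{sec:conclusion}, with no proof offered. Your proposal is therefore not a comparison target but a genuine resolution of the open problem, and it is correct.

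The argument is sound. For a cutting blocking set $\M\subseteq PG(k-1,q)$ and a hyperplane $H$, the set $\mathcal A=\M\setminus H$ does block every affine hyperplane of $AG(k-1,q)=PG(k-1,q)\setminus H$: if some projective hyperplane $H'\neq H$ had $\M\cap H'\subseteq H$, then $\langle \M\cap H'\rangle\subseteq H\cap H'\subsetneq H'$, contradicting Proposition~\ref{prop:charactcutting}. The Brouwer--Schrijver theorem on affine blocking sets then forces $|\mathcal A|\geq (k-1)(q-1)+1$, and since the minimum distance of the associated code is $\min_H \ch_{\M}(PG(k-1,q)\setminus H)\geq \min_H |\M\setminus H|$, the bound follows. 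The small-$k$ checks and the sharpness via the tetrahedron of Theorem~\ref{thm:tetraedro} are also fine.

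So rather than differing from the paper's approach, you have supplied what the paper lacks. It is worth noting that this is precisely the route by which the conjecture was subsequently settled in the literature (independently in follow-up work by the present authors with Ravagnani and by Tang et al.), so your instinct to reduce to the affine blocking set bound is the ``right'' one. The only external ingredient is the Brouwer--Schrijver inequality itself, which is indeed a nontrivial polynomial-method result; everything else is routine, exactly as you say.
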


If the above conjecture is true, then, combining it with the Griesmer bound, as we did for Corollary \ref{coro:griesmer}, we would get a new lower bound on the length of minimal codes, namely 
$$n \geq (q-1)(k-1)+1 + \sum_{i=1}^{k-1} \left\lceil \frac{(q-1)(k-1)+1}{q^i}\right\rceil.$$
It is easy to see that this lower bound would improve Theorem \ref{prop:lowboundgeo} for every set of parameters. This is not in contrast with our experimental results, which show that the bound of Theorem \ref{prop:lowboundgeo} is not sharp.

\bibliographystyle{abbrv}
\bibliography{MinimalBiblio}

\end{document}